\let\csname equation*\endcsname\relax
\let\csname endequation*\endcsname\relax
\theoremstyle{definition}
\newtheorem{lemma}{Lemma}
\newtheorem{proposition}{Proposition}
\newtheorem{theorem}{Theorem}
\newtheorem{definition}{Definition}
\newtheorem{remark}{Remark}
\newcommand{\one}{\mathds{1}}
\newcommand{\ket}[1]{\lvert #1 \rangle}
\newcommand{\bra}[1]{\langle #1 \rvert}
\newcommand{\braket}[2]{\langle #1 \lvert #2 \rangle}
\newcommand{\ketbra}[2]{\lvert #1 \rangle \langle #2 \rvert}
\newcommand{\iu}{\mathrm{i}\mkern1mu}
\begin{document}

\title[Finding Quantum Codes via Riemannian Optimization]{Finding Quantum Codes via Riemannian Optimization}

\author{Miguel Casanova$^1$, Kentaro Ohki$^2$, Francesco Ticozzi$^{1,3}$}

\address{$^1$ Department of Information Engineering, University of Padova, Italy}
\address{$^2$ Graduate School of Informatics, Kyoto University, Japan}
\address{$^3$ QTech center, University of Padova, Italy}
\eads{\mailto{casanovame@dei.unipd.it}, \mailto{ohki@i.kyoto-u.ac.jp} and \mailto{ticozzi@dei.unipd.it}}

\begin{abstract}

We propose a novel optimization scheme designed to find
optimally correctable subspace codes for a known quantum noise channel. To each candidate subspace code we first associate a universal recovery map, as if the code was perfectly correctable, and aim to maximize a performance functional that combines a modified channel fidelity with a tuneable regularization term that promotes simpler codes.
With this choice optimization is performed only over the set of codes,
and not over the set of recovery operators. The set of codes of fixed dimension is parametrized as a complex-valued Stiefel manifold: the resulting non-convex optimization problem is then solved by gradient-based local algorithms. When perfectly correctable codes cannot be found, a second optimization routine is run on the recovery Kraus map, also parametrized in a suitable Stiefel manifold via Stinespring representation.
To test the approach, correctable codes are sought in different scenarios and compared to existing ones: three qubits subjected to
bit-flip errors (single and correlated), four qubits undergoing local amplitude
damping and five qubits subjected to local depolarizing channels.
Approximate codes are found and tested for the previous examples as well pure non-Markovian dephasing noise acting on a $7/2$ spin,
induced by a $1/2$ spin bath, and the noise of the first three qubits of IBM's \texttt{ibm\_kyoto} quantum computer. The fidelity results are competitive with existing iterative optimization algorithms, with respect to which we maintain a strong computational advantage, while obtaining simpler codes.
\end{abstract}

\noindent{\it Keywords\/}: Quantum Error Correction, Stiefel Manifold, Riemannian Optimization

\submitto{Quantum Science and Technology}
\maketitle

\section{Introduction}
Quantum algorithms and their promised computational advantage rely, in the unitary-computation paradigm, on the assumption of being able to access and coherently manipulate logical information encoded in a physical quantum device for a sufficiently long time. Providing methods to increase the dauntingly short coherence time of most solid-state implementation by use of protected encodings \cite{lidarDecoherenceFreeSubspacesQuantum1998, knillTheoryQuantumError2000}, active error-correcting codes \cite{knillTheoryQuantumErrorcorrecting1997,gottesman}, or noise-suppressing controls \cite{violaDynamicalSuppressionDecoherence1998} has been the core business of the quantum error-correction (QEC) field \cite{lidarQuantumErrorCorrection2013}. Improving on existing experimental demonstrations \cite{cory1998experimental,chiaverini2004realization,schindler2011experimental,reed2012realization}, the development of these techniques is arguably a necessary step in order to surpass the current phase of noisy intermediate-scale quantum computers \cite{preskillQuantumComputingNISQ2018}.

In this work we focus on QEC codes: these are specified by of a set of quantum states (the {\em code})
faithfully representing some logical information, and a recovery
map that is able to correct the effect of certain classes of environmental noise on such states \cite{knillProtectedRealizationsQuantum2006,gottesman,lidarQuantumErrorCorrection2013}.

The foundational results in QEC provide conditions for exact error correction, and propose effective codes allowing for perfect avoidance or correction of certain classes of errors. The underlying assumption is that the considered error processes are assumed Markovian and represent the dominant part of the actual noise evolution. Even under such simplifying assumptions the task can be daunting: 
while the existence of suitable noiseless encodings of quantum information for a given noise model can be assessed by analyzing the spectral decomposition of the noisy dynamics \cite{choiMethodFindQuantum2006,ticozziFindingQuantumNoiseless}, it is known that finding perfectly correctable codes is an NP-Hard problem, even with perfectly known error models
\cite{blume-kohoutInformationPreservingStructures2010}.

Another route towards a systematic code design has been explored: assuming a given noise model, without overly simplifying assumptions, one can perform a direct numerical search of approximate quantum error correcting codes by optimizing a suitable fidelity functional over the possible encoding and recovery  \cite{reimpellIterativeOptimizationQuantum2005,kosutRobustQuantumError2008}.
The approach presents its own challenges, as the emerging optimization problems are in general non-convex and thus proving their convergence to global optima, i.e. the best available codes, is challenging. Furthermore, they often lead to codes that are challenging to interpret and, more crucially, implement: when phenomenological error models are considered, optima are degenerate and the numerical algorithm are unable to further select among them; when noise models reconstructed from experimental data are considered instead, limits in the accuracy and errors lead to codes with support on states with complex description.
The existing optimization-based approaches are summarized in more detail in Section \ref{sec:comparison}.

In this work, we re-consider this path towards better QEC, and propose a flexible tool that is able to:
(1) determine whether perfect (zero-error) subspace codes of a given dimension exist for a given error model;  and if they do not, (2) find optimal subspace codes and their corresponding recovery with minimal numerical effort, avoiding iterative optimization; (3) promote codes with simpler structure, easier to effectively implement.

The algorithm we propose is based on gradient-based
optimization on Riemannian manifolds. 
In particular, we represent 
quantum subspace codes of  fixed dimension as points in the complex-valued
Stiefel manifold. 
While more general subsystem codes have been proposed and realized \cite{knillTheoryQuantumError2000,viola2001experimental}, subspace codes are prevalent in the literature. In fact, one can show that for perfect QEC the choice is non-restrictive \cite{ticozziQuantumInformationEncoding2010}, they can be initialized via unitary control, and their description is easier. With this choice, we are able to reduce the complexity by avoiding the use of encoding maps (i.e., completely-positive trace-preserving mapping of logical information into physical states, whose output is the candidate code). 
This parametrization has also the advantage of returning a valid quantum
code with no need for
a final projection or renormalization step, unlike in e.g. \cite{kosutQuantumErrorCorrection2009}.

We initially avoid the computational burden of optimizing on the correction map, by  considering a time reversal or Petz recovery map on the code support
\cite{petzSufficiencyChannelsNeumann1988, barnumReversingQuantumDynamics2002, ticozziTimereversalSpacetimeHarmonic2010}.
The latter is known to be a perfect recovery map for correctable codes and, in general, a near-optimal correction map \cite{barnumReversingQuantumDynamics2002,ticozziTimereversalSpacetimeHarmonic2010}.

As for the cost function, we derive a quantity that measures the channel fidelity of the error-corrected evolution restricted to the code,
which is shown to be related to the usual channel fidelity.
Perfectly correctable codes are certified by saturation of this functional. 
A key feature of our method is the introduction of a tune-able regularization term in the optimization functional, aimed at promoting simpler codes, i.e. codes whose description is ``sparse''. This is obtained including a (small) term proportional to the $\ell_1$-norm of the code description in the functional. As we shall see, this modification allows us to address one of the main concerns regarding the complexity of the codes proposed by optimization algorithms, promoting simpler codes when the optima are multiple, and in general weighing the complexity of the code in the definition of the optimum. 

While the optimization problem is non-convex, the algorithm's limited run-time allows for multiple trials starting form different initial conditions in order to avoid local minima. 
As for realistic noise the existence of perfectly correctable codes is not expected, a single run of  a secondary optimization step, with the previously-obtained candidate as initial condition, can be then used to fine tune the correction operation when the universal one is not optimal.

The paper is structured as follows. We begin by giving a brief introduction to the key elements of QEC
theory of interest in this work in Section \ref{sec:qec}. Aiming to develop a general method based on the geometry of the problem, we shall adopt an abstract viewpoint inspired by \cite{knillProtectedRealizationsQuantum2006,ticozziQuantumInformationEncoding2010,blume-kohoutInformationPreservingStructures2010}.  We next proceed providing a derivation of the optimization problem
to be considered and the cost function to be used in Section \ref{sec:problem}. We shall prove that the basic functional we derive, without regularization term, is equivalent to a channel fidelity with input restricted to the code and can be use to certify that a code is perfectly correctable when its value is maximal. Section \ref{sec:numerical} introduces the Riemannnian
optimization framework and a numerical algorithm to solve the optimization problem. Some of the detailed calculations regarding the gradients are presented in the appendix to avoid overburdening the reader. Finally,
we show the effectiveness of the algorithm with a series of numerical examples in Section \ref{sec:examples}, comparing the proposed optimal codes with known codes and with random codes. Further details, and the comparison of regularized versus non-regularized codewords are reported in the appendix.

\section{Background and motivation}\label{sec:qec}

\subsection{QEC essentials}

In this work, we refer to {quantum noise} as the effect of a quantum environment on a quantum system of interest.
Assuming uncorrelated initial states, noise can be modelled in Schroedinger's
picture as a quantum Completly-Positive, Trace-Preserving (CPTP) map, also known as a quantum channel
\cite{nielsenQuantumComputationQuantum2012}.
Consider a finite dimensional system associated to a Hilbert space $\cal H,$ and let
$\mathcal{D}(\mathcal{H})$ be the set of density matrices, a CPTP map
$\mathcal{N}(\rho): \mathcal{D}(\mathcal{H}) \rightarrow \mathcal{D}(\mathcal{H})$,
can be written in the operator-sum representation \cite{krausStatesEffectsOperations1983}
$\mathcal{N}(\rho) = \sum_j N_j \rho N_j^\dagger$,
where $N_j$ are operators on $\cal H$ such that $\sum_j N_j^\dagger N_j = \mathds{1}$. When a CPTP map $\cal E$ admits a representation with kraus operators $E_j$ we write ${\cal E}\sim \{E_j\}.$

Quantum Error Correction (QEC) refers to a variety of techniques that aim to revert the effect of $\cal N$
on the encoded logical information on a quantum code $\mathcal{C} \subset \mathcal{D}(\mathcal{H}),$  a set of states designed to faithfully represent the logical information of interest \cite{knillProtectedRealizationsQuantum2006,ticozziQuantumInformationEncoding2010}.
In particular, we  focus on techniques designed to reverse the effect of a given {\em noise map} $\mathcal{N}(\rho)$, by implementing a CPTP {\em recovery map}
$\mathcal{R}(\rho).$
 We call a code $\cal C$ {\em correctable} for the noise $\cal N$ (or $\cal N$-correctable) if there exists a CPTP recovery map such that 
$\mathcal{R} \circ \mathcal{N}(\rho) = \rho,\quad \forall \rho \in \mathcal{C}$.

It is well known that classical approaches based on redundant encodings cannot be used directly in a quantum setting, due to the no-cloning theorem \cite{woottersSingleQuantumCannot1982}.
Another difference to classical information, is the fact that quantum errors are
continuous and of an infinite variety, instead of discrete combinations of bit flips
\cite{nielsenQuantumComputationQuantum2012}.
For these reasons, to properly design $\cal C$ and $\cal R$ one has to
heavily rely on the knowledge of the dynamical properties of the system
at hand and the noise that affects it.
In many applications, however, the dominant noise effects can be assumed on a phenomenological basis, or the free evolution of the system can be reconstructed from tomographic techniques \cite{maurodarianoQuantumTomography2003}.  For instance, a common assumption that can be made for systems
of qubits is that the noise is independent for each qubit
\cite{knillTheoryQuantumErrorcorrecting1997}, i.e. that in some representation each of the operation elements $N_j$ acts nontrivially only on a single qubit.  Indeed, these are the models
on which a substantial part of the early QEC theory has focused, and it
does not directly lend itself to generalization to physical systems that are not canonically
decomposable into qubits and that are subject to non-independent (correlated)
noises \cite{knillTheoryQuantumError2000}.
Some generalizations to systems of qudits of dimension $d$ do exist
\cite{ashikhminNonbinaryQuantumStabilizer2001, nadkarniQuantumErrorCorrection2021},
but it is required that $d$ be a prime number.
As for correlated noise, work addressing specific cases has been developed in e.g.
\cite{liEfficientQuantumError2011, liQuantumErrorCorrection2023, clemensQuantumErrorCorrection2004},
but they do not develop a general theoretical framework. Instead, they treat specific examples of
correlated noise models.

As the main aim of this work is to find correctable codes, we next recall what are the most general codes and, at the same time, why restricting to subspace codes is not restrictive. The information of interest,
originally associated to an  Hilbert space $\mathcal{H}_Q$, can be spread over a bigger
Hilbert space $\mathcal{H}_P$  through an adequate encoding, while remaining recoverable.
The most general codes are subsystem codes \cite{knillProtectedRealizationsQuantum2006},
i.e. any faithful encoding of a logical quantum system with Hilbert space $\mathcal{H}_Q$ can be expressed in terms of the following decomposition
\begin{equation}
    \label{eq:subsystem-enc}
    \mathcal{H}_P = (\mathcal{H}_{S} \otimes \mathcal{H}_F) \oplus \mathcal{H}_R,
\end{equation}
where $S$ is a realization of $Q$ as a subsystem of $P$, $F$ is called the
{cosubsystem} of $S$, and $R$ is the {remainder} subspace.
Under this decomposition, we have that
$\mathcal{C}$ has support only on $ \mathcal{H}_S \otimes \mathcal{H}_F$ \cite{ticozziQuantumInformationEncoding2010}.
It has been shown in \cite{kribsUnifiedGeneralizedApproach2005,ticozziQuantumInformationEncoding2010} that whenever there exists a correctable subsystem encoding, there is also a correctable subspace encoding, the latter being the first code model introduced in QEC theory.  A subspace encoding can be expressed
as the special case of decomposition (\ref{eq:subsystem-enc}) when
$\mathcal{H}_{S} \otimes \mathcal{H}_F = \mathcal{H}_{S}$. For this
reason, in the following we
are going to restrict our attention to subspace codes.

\subsection{Universal recovery maps from time reversal}
Let $\cal C$ be a subspace code, and $\Pi$ the orthogonal projection associated to $\mathcal{H}_S$ as above. Notice that each orthogonal projector $\Pi$ fully identify a subspace code as ${\cal C}_\Pi=\{\rho\in\mathcal{D}(\mathcal{H})|\Pi\rho\Pi=\rho\}$, so with some abuse of terminology in the following we shall sometimes refer to projectors as codes, and use ${\cal C}\sim \Pi$ to indicate the subspace code associated to the support of an orthogonal projection $\Pi$.
By definition, correctable codes satisfy $\mathcal{R} \circ \mathcal{N}(\rho) = \rho$
for every code word $\rho = \Pi \rho \Pi$. The existence of such a map ${\cal R}$ follows from the following condition,
given in Kraus representation,
\begin{equation}
   \label{eq:corrsubspace}
   \Pi N_j^\dagger N_k \Pi = \alpha_{jk} \Pi.
\end{equation}
where $[\alpha_{jk}]_{jk}$ forms a Hermitian matrix \cite{nielsenQuantumComputationQuantum2012}.
For a correctable subspace code and a given noise map, many different recovery operation may exist.
Nonetheless, a near-optimal one can be obtained as a special case of the time reversal map
$\mathcal{T}_{\rho,\mathcal{N}}.$ If ${\cal N} \sim \{ N_j\},$ then the latter is defined as $\mathcal{T}_{\rho,\mathcal{N}}\sim \{\rho^{1/2} N_j^\dagger (\mathcal{N}(\rho))^{-1/2}\}$,
also known as the Petz recovery map
\cite{petzSufficiencyChannelsNeumann1988, barnumReversingQuantumDynamics2002, ticozziTimereversalSpacetimeHarmonic2010}.
The general form of the time-reversal is dependent on the initial state, as it is constructed to ensure $\mathcal{T}_{\rho,\mathcal{N}}\circ\mathcal{N}(\rho)=\rho,$ but it can be
adapted to recover subspace codes by considering $\mathcal{T}_{\Pi/d_C,\mathcal{N}}$,
where $\Pi$ is the orthogonal projection with support on the code and $d_C$ is its dimension. Since
$(\Pi/d_C)^{1/2} N_j^\dagger \mathcal{N}(\Pi/d_C)^{-1/2} = \Pi^{1/2} N_j^\dagger \mathcal{N}(\Pi)^{-1/2},$ we have that $\mathcal{T}_{\Pi/d_C,\mathcal{N}}=\mathcal{T}_{\Pi,\mathcal{N}}.$
Therefore, we define the time-reversal (Petz) recovery map for the noise $\mathcal N$ and subspace code with projector $\Pi$ as
\begin{equation}
    \mathcal{R}_{\mathcal{N}, \Pi} \sim \{R_j\}, \quad
    R_j = \Pi^{1/2} N_j^\dagger \mathcal{N}(\Pi)^{-1/2} \ \forall j.
\end{equation}
Notice that since $\Pi$ is a projector, we have that $\Pi^{1/2} = \Pi$. Therefore,
the expression of the Kraus of operators of the recovery can be simplified to
$R_j = \Pi N_j^\dagger \mathcal{N}(\Pi)^{-1}$.
This map represent a perfect recovery for a correctable subspace code associated to $\Pi,$ as we recall next. The result is essentially known, see e.g. the discussion in \cite{barnumReversingQuantumDynamics2002}.
Nonetheless, a compact proof is included here for the sake of completeness.

\begin{proposition} $\mathcal{R}_{\mathcal{N}, \Pi}$ is a perfect recovery for a
$\cal N$-correctable code associated to $\Pi$.
\end{proposition}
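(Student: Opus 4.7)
The plan is to exploit the unitary freedom in the Kraus representation of $\mathcal{N}$ to diagonalize the correctability condition \eqref{eq:corrsubspace}, after which the Petz Kraus operators collapse to a simple set of partial isometries on the code. Since $[\alpha_{jk}]$ is Hermitian, a unitary $U$ brings it to $\operatorname{diag}(d_1,d_2,\dots)$ with $d_k\ge 0$. Defining $\tilde{N}_j = \sum_k U_{jk} N_k$ yields an equivalent Kraus representation of the same channel $\mathcal{N}$ obeying $\Pi \tilde{N}_j^\dagger \tilde{N}_k \Pi = d_k \delta_{jk}\Pi$. Substituting the $\tilde{N}_j$ in the Petz recovery only mixes its Kraus operators by the same unitary, hence $\mathcal{R}_{\mathcal{N},\Pi}$ itself is unchanged, and it is enough to work in this canonical basis.

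From the diagonal relation $\Pi \tilde{N}_k^\dagger \tilde{N}_k \Pi = d_k \Pi$, polar decomposition gives $\tilde{N}_k \Pi = \sqrt{d_k}\, V_k$ for $d_k > 0$, with $V_k$ a partial isometry satisfying $V_k^\dagger V_k = \Pi$ (while $\tilde{N}_k\Pi = 0$ when $d_k = 0$). The off-diagonal conditions force $V_j^\dagger V_k = \delta_{jk}\Pi$, so the projectors $P_k := V_k V_k^\dagger$ are mutually orthogonal and give the spectral form $\mathcal{N}(\Pi) = \sum_k d_k P_k$. Reading $\mathcal{N}(\Pi)^{-1/2} = \sum_{k:d_k>0} d_k^{-1/2} P_k$ on its support and substituting into $\Pi \tilde{N}_j^\dagger \mathcal{N}(\Pi)^{-1/2}$, a short calculation using $\Pi^{1/2}=\Pi$ and $\Pi V_k^\dagger = V_k^\dagger$ collapses the Petz Kraus operators to $R_j = V_j^\dagger$ for each $j$ with $d_j > 0$.

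From here the verification is direct: for any $\rho = \Pi\rho\Pi$ the identity $V_j^\dagger \tilde{N}_k \Pi = \sqrt{d_k}\,\delta_{jk}\,\Pi$ yields
\[
\mathcal{R}_{\mathcal{N},\Pi}\circ\mathcal{N}(\rho) \;=\; \sum_{j,k} V_j^\dagger \tilde{N}_k \rho \tilde{N}_k^\dagger V_j \;=\; \Big(\sum_k d_k\Big)\,\rho \;=\; \rho,
\]
where $\sum_k d_k = 1$ follows by tracing the identity $\Pi\bigl(\sum_k \tilde{N}_k^\dagger \tilde{N}_k\bigr)\Pi = \Pi$. The only subtle point is the treatment of the kernel of $\mathcal{N}(\Pi)$ when some $d_k$ vanish: $\mathcal{N}(\Pi)^{-1/2}$ must then be read as the Moore-Penrose pseudoinverse on the support of $\mathcal{N}(\Pi)$, but the corresponding terms drop out automatically because $\tilde{N}_k\Pi = 0$ whenever $d_k=0$, so no ambiguity enters the final identity.
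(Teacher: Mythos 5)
Your proof is correct, but it follows a genuinely different route from the one in the paper. You diagonalize the Knill--Laflamme matrix $[\alpha_{jk}]$ of condition \eqref{eq:corrsubspace} by a unitary change of Kraus representation (noting, correctly, that this leaves both $\mathcal{N}(\Pi)$ and the Petz map itself unchanged), extract the partial isometries $V_k$ with mutually orthogonal ranges, and show that the Petz Kraus operators collapse to $R_j = V_j^\dagger$, after which $\mathcal{R}_{\mathcal{N},\Pi}\circ\mathcal{N}=\mathrm{id}$ on the code is verified directly; this is the Nielsen--Chuang-style canonical-recovery argument, and it has the merit of being self-contained, of using the correctability condition explicitly, and of treating the pseudoinverse of $\mathcal{N}(\Pi)$ on its support without ambiguity. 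The paper argues differently: it computes the Kraus operators of the time reversal of $\mathcal{R}_{\mathcal{N},\Pi}$ with respect to $\mathcal{N}(\Pi)$, finds them to be $\Pi_{\mathcal{N}(\Pi)}N_j\Pi$, and concludes that this double reversal reproduces $\mathcal{N}$ on the code (using only $\Pi N_j^\dagger\Pi_{\mathcal{N}(\Pi)}^\perp=0$); correctability then enters through the reversal duality of the Petz map imported from the cited references rather than through an explicit diagonalization. The paper's route is shorter and representation-independent, while yours makes the mechanism of the recovery (syndrome subspaces and correcting isometries) explicit. One cosmetic point: the non-negativity of the $d_k$ does not follow from Hermiticity of $[\alpha_{jk}]$ alone, but it does follow either from positive semidefiniteness of $[\alpha_{jk}]$ or, more simply, from $d_k\Pi=\Pi\tilde{N}_k^\dagger\tilde{N}_k\Pi\geq 0$ after diagonalization, so the statement stands.
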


\begin{proof} We want to show that the
application of this recovery implements a time-reversal of the noise map for all states
belonging to the code, which  is equivalent to $\mathcal{T}_{\mathcal{N}(\Pi),\mathcal{R}_{\mathcal{N}, \Pi}}={\cal N}$ for states such that $\Pi \rho \Pi = \rho$.
We have that the Kraus operators of $\mathcal{T}_{\mathcal{N}(\Pi),(\mathcal{R}_{\mathcal{N}, \Pi})}$ are
\begin{equation}
\begin{split}
    \mathcal{N}(\Pi)^{1/2} R_j^\dagger \Pi &=
    \mathcal{N}(\Pi)^{1/2} (\mathcal{N}(\Pi)^{-1/2} N_j \Pi) \Pi \\
    &= \Pi_{\mathcal{N}(\Pi)} N_j \Pi.
\end{split}
\end{equation}
Then, for $\Pi \rho \Pi$ one has that
\begin{equation}
    \begin{split}
        \mathcal{T}_{\mathcal{N}(\Pi),\mathcal{R}_{\mathcal{N}, \Pi}}(\Pi \rho \Pi)
        &= \sum_j \Pi_{\mathcal{N}(\Pi)} N_j \Pi \Pi \rho \Pi \Pi N_j^\dagger \Pi_{\mathcal{N}(\Pi)} \\
        &= \sum_j \Pi_{\mathcal{N}(\Pi)} N_j \Pi \rho \Pi N_j^\dagger \Pi_{\mathcal{N}(\Pi)},
    \end{split}
\end{equation}
where $\Pi_{{\cal N}(\Pi)} = {\cal N}(\Pi)^{-1} {\cal N}(\Pi)$. However, $\Pi N_j^\dagger \Pi_{\mathcal{N}(\Pi)}^\perp = 0$, therefore
\begin{equation}
    \Pi N_j^\dagger = \Pi N_j^\dagger (\Pi_{\mathcal{N}(\Pi)} + \Pi_{\mathcal{N}(\Pi)}^\perp) =
    \Pi N_j^\dagger \Pi_{\mathcal{N}(\Pi)},
\end{equation}
\begin{equation}
    \mathcal{T}_{\mathcal{N}(\Pi),\mathcal{R}_{\mathcal{N}, \Pi}}(\Pi \rho \Pi) =
    \sum_j  N_j \Pi \rho \Pi N_j^\dagger = \mathcal{N}(\Pi \rho \Pi).
\end{equation}
\end{proof}

Different ways to implement this map have been proposed see e.g. \cite{gilyenQuantumAlgorithmPetz2022}, while a 
continuous time version is proposed in \cite{kwonReversingLindbladDynamics2022}. In the following we will exploit the existence of a standardized recovery map in order to simplify the search for an effective error-correction protocol.

\section{Optimization problem}\label{sec:problem}

\subsection{Characterizing correctable subspace codes}

In this section we derive the cost function and the specific problem
we shall solve numerically by finding a suitable characterization of correctable subspace codes. The proposed characterization has the advantage
that it is independent of the initial state of the system, as it is based
only on the subspace code and the noise model.

Let $\mathcal{N}$ be a known noise channel defined as
\begin{equation}
    \mathcal{N}(\rho) = \sum_{j=1}^m N_j \rho N_j^\dagger,
    \label{eq:noise}
\end{equation}
where $\sum_{j=1}^m N_j^\dagger N_j = \mathds{1}_n$.
Consider the recovery map for the code identified by $\Pi$:
\begin{equation}
\begin{split}
    \mathcal{R}_{\mathcal{N}, \Pi}(\rho) &= \sum_{k=1}^m R_k \rho R_k^\dagger \\
    &= \sum_{k=1}^m \Pi N_k^\dagger {\cal N}(\Pi)^{-1/2} \rho {\cal N}(\Pi)^{-1/2} N_k \Pi.
\end{split}
\label{eq:petz}
\end{equation}
Recall that the Petz recovery map is a perfect recovery if $\Pi$ is correctable for
$\mathcal{N}$. Therefore, finding a correctable code for the noise $\cal N$ is equivalent  to find a projection matrix $\Pi$,
such that $\mathcal{R}_{\mathcal{N},\Pi}$ acts as the reversal of $\mathcal{N}$.
Relaxing the problem to a variational formulation, we want the composition of the noise and the recovery,
$\mathcal{R}_{{\cal N}, \Pi} \circ \mathcal{N}(\rho)$, to be as close as possible to the
identity map, when restricted to the subspace of the code. Next, we will derive a natural distance to measure how close the two maps are. Define:
\begin{equation}
    \mathcal{P}_\Pi(\rho) = \Pi \rho \Pi.
\end{equation}

In order to restrict the map to the code, we consider only input states such that
$\mathcal{P}_\Pi(\rho) = \rho$. For this reason, define the map
\begin{equation}\label{eq:Acheck}
    \mathcal{A}_\Pi(\rho) = \mathcal{R}_{{\cal N}, \Pi} \circ \mathcal{N} \circ \mathcal{P}_{\Pi} (\rho) =
    \sum_{j,k=1}^m \underbrace{\Pi N_k^\dagger \mathcal{N}(\Pi)^{-1/2} N_j \Pi}_{A_{jk} = R_k N_j \Pi} \rho
    \Pi N_j^\dagger \mathcal{N}(\Pi)^{-1/2} N_k \Pi.
\end{equation}

With this definition we have that ${\cal A}_\Pi(\rho) = \rho$ if and only if $\rho$
is in the code space and is perfectly corrected by the recovery map. On the other hand,
if $\rho$ has support othogonal to the code space, we have that ${\cal A}_\Pi(\rho) = 0$.
Clearly $\mathcal{A}_\Pi$ is not a trace preserving map, but a trace non-increasing map.
We next prove it is also self-adjoint with respect to the standard Hilbert-Schmidt product.

\begin{proposition}
    The map $\mathcal{A}_\Pi$
    is self-adjoint.
\end{proposition}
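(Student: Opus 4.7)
The strategy is to exploit the manifest Kraus representation in (\ref{eq:Acheck}) directly, without passing through the definition of the adjoint in terms of the Hilbert--Schmidt inner product. For any completely positive map written as $\Phi(\rho)=\sum_\ell K_\ell\rho K_\ell^\dagger$, the HS-adjoint is the completely positive map obtained by daggering each Kraus operator, $\Phi^{*}(\rho)=\sum_\ell K_\ell^\dagger\rho K_\ell$. Self-adjointness of $\check{\mathcal{A}}_\Pi$ therefore reduces to showing that the Kraus families $\{\check{A}_{jk}\}_{j,k}$ and $\{\check{A}_{jk}^\dagger\}_{j,k}$ generate the same map.

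The key observation I would highlight is that the two families differ only by a relabelling of indices. Since $\mathcal{N}(\Pi)^{-1/2}$ is self-adjoint (it is a function of the positive operator $\mathcal{N}(\Pi)$, interpreted via the Moore--Penrose pseudoinverse on its support) and $\Pi$ is an orthogonal projector, taking the Hermitian conjugate of $\check{A}_{jk}=\Pi N_k^\dagger\mathcal{N}(\Pi)^{-1/2}N_j\Pi$ merely swaps the positions of $N_j$ and $N_k$, yielding $\check{A}_{jk}^\dagger=\check{A}_{kj}$. Substituting into $\check{\mathcal{A}}_\Pi^{*}(\rho)=\sum_{j,k}\check{A}_{jk}^\dagger\rho\check{A}_{jk}=\sum_{j,k}\check{A}_{kj}\rho\check{A}_{kj}^\dagger$ and relabelling the dummy indices $(j,k)\leftrightarrow(k,j)$ in the double sum recovers precisely $\check{\mathcal{A}}_\Pi(\rho)$.

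There is essentially no hard step in the argument: the whole proof is the one-line symmetry identity $\check{A}_{jk}^\dagger=\check{A}_{kj}$ combined with a relabelling of the summation indices. The only subtlety worth flagging is the self-adjointness of $\mathcal{N}(\Pi)^{-1/2}$, which requires that the inverse square root be consistently defined on $\mathrm{supp}\,\mathcal{N}(\Pi)$; this is exactly the convention already in force in (\ref{eq:petz}), and the sandwiching by $\Pi$ outside ensures that every operator appearing in $\check{A}_{jk}$ is well defined as a map on $\mathcal{H}$.
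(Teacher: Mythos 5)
Your proposal is correct and follows essentially the same route as the paper: it uses that the Hilbert--Schmidt adjoint of a Kraus-represented CP map is obtained by daggering each Kraus operator, then invokes the symmetry $\check{A}_{jk}^\dagger=\check{A}_{kj}$ and a relabelling of the double sum to conclude $\check{\mathcal{A}}_\Pi^\dagger=\check{\mathcal{A}}_\Pi$. The paper organizes the relabelling by pairing terms with $j\leq k$, but this is the same argument; your remark on defining $\mathcal{N}(\Pi)^{-1/2}$ on the support is a sensible clarification consistent with the paper's conventions.
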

\begin{proof}
    The map $\mathcal{A}_\Pi$ can be written as in \eqref{eq:Acheck} as two-index operator sum with operators $A_{jk} =\Pi N_k^\dagger \mathcal{N}(\Pi)^{-1/2} N_j \Pi$. Even though these operators are not
    self-adjoint in general, they do satisfy the property that $A_{jk}^\dagger = A_{kj}$, i.e.
    taking adjoint corresponds to an exchange of the two indeces.
    Then, the adjoint operator sum can be written as
    \begin{equation}
    \begin{split}
        \mathcal{A}_\Pi^\dagger(\rho) &=
        \sum_{jk} A_{jk}^\dagger \rho A_{jk} =
        \sum_{j \leq k} (A_{jk}^\dagger \rho A_{jk} + A_{kj}^\dagger \rho A_{kj}) \\
        &= \sum_{j \leq k} (A_{kj} \rho A_{kj}^\dagger + A_{jk} \rho A_{jk}^\dagger)
        = \sum_{jk} A_{jk} \rho A_{jk}^\dagger \\
        &= \mathcal{A}_\Pi(\rho).
    \end{split}
    \end{equation}
\end{proof}

Next, in order to better exploit the linear structure of the maps, we consider
their vectorized representation.

The vectorization operation $\mathrm{vec}(X)$, defined as the stacking of the columns of matrix $X$, satisfies the property
$\mathrm{vec}(A X B) = (B^\top \otimes A) \mathrm{vec}(X)$. Then,
$\mathcal{A}_\Pi$ can be rewritten as
\begin{equation}
\begin{split}
    \mathrm{vec}(\mathcal{A}_\Pi (\rho)) &=
    \mathrm{vec}(\sum_{jk} A_{jk} \rho A_{jk}^\dagger) \\
    &=
    \underbrace{\sum_{jk} \overline{A_{jk}} \otimes A_{jk}}_{\hat{A}_\Pi} \mathrm{vec}(\rho),
\end{split}
\end{equation}
where
\begin{equation}
\begin{split}
    \hat{A}_\Pi &= \sum_{j,k=1}^m (\Pi^\top N_j^\top \mathcal{N}(\Pi)^{-\top/2} \overline{N_k} \Pi^\top)
    \otimes (\Pi N_k^\dagger \mathcal{N}(\Pi)^{-1/2} N_j \Pi) \\
    &= \sum_{jk} \overline{A_{jk}} \otimes A_{jk}.
\end{split}
\end{equation}

\begin{lemma}
    The operator $\hat{A}_\Pi$ is self-adjoint.
\end{lemma}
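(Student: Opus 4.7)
The plan is to verify $\hat{A}_\Pi^\dagger = \hat{A}_\Pi$ directly, by taking the adjoint termwise in the defining sum and then re-indexing. First I would invoke the standard Kronecker-product identity $(X \otimes Y)^\dagger = X^\dagger \otimes Y^\dagger$ together with $(\overline{X})^\dagger = X^\top$ to each summand of $\hat{A}_\Pi = \sum_{jk} \overline{\check{A}_{jk}} \otimes \check{A}_{jk}$. This immediately yields
\begin{equation}
\hat{A}_\Pi^\dagger = \sum_{jk} \check{A}_{jk}^\top \otimes \check{A}_{jk}^\dagger.
\end{equation}

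Next I would use the key algebraic symmetry $\check{A}_{jk}^\dagger = \check{A}_{kj}$ already observed in the proof of the previous proposition. Combined with the elementary identity $X^\top = \overline{X^\dagger}$, the generic summand rewrites as $\overline{\check{A}_{kj}} \otimes \check{A}_{kj}$, and swapping the dummy indices $j \leftrightarrow k$ in the double sum restores the original expression, giving $\hat{A}_\Pi^\dagger = \hat{A}_\Pi$.

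As a sanity check, one could also argue abstractly: vectorization is an isometry between matrices equipped with the Hilbert--Schmidt inner product and $\mathbb{C}^{n^2}$ with the standard inner product, under which a superoperator is Hilbert--Schmidt self-adjoint if and only if its matrix representation is Hermitian. Since the preceding proposition establishes that $\check{\mathcal{A}}_\Pi$ is Hilbert--Schmidt self-adjoint, the lemma is essentially a corollary; nevertheless the direct calculation above has the merit of verifying that the chosen column-stacking vectorization convention correctly intertwines the two notions of adjoint (this is where the conjugation $\overline{\check{A}_{jk}}$ on the left Kronecker factor is needed).

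There is no substantive obstacle in this proof: once the adjoint-swap property $\check{A}_{jk}^\dagger = \check{A}_{kj}$ is in hand from the previous proposition, the only point of care is correctly tracking the complex conjugation introduced by the vectorization map, after which the lemma reduces to a one-line re-indexing of the double sum.
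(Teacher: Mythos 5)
Your proof is correct and rests on exactly the same ingredients as the paper's: the swap property $\check{A}_{jk}^\dagger=\check{A}_{kj}$ together with the Kronecker-product adjoint and conjugation identities, followed by a re-indexing of the double sum (the paper merely organizes this re-indexing as a pairing over $j\le k$ and writes $\hat A_\Pi$ as a sum of symmetrized terms). No gap; your termwise-adjoint-plus-reindex presentation is, if anything, slightly cleaner.
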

\begin{proof}
    Using the fact that $A_{jk}^\dagger = A_{kj}$, one can write
    \begin{equation}
        \begin{split}
            \hat{A} &=
            \sum_{j \leq k} (\overline{A_{jk}} \otimes A_{jk} + \overline{A_{kj}} \otimes A_{kj}) \\
            &=
            \sum_{j \leq k} (\overline{A_{jk}} \otimes A_{jk} + (\overline{A_{jk}} \otimes A_{jk})^\dagger) \\
            &= 2 \sum_{j \leq k} \mathrm{Sym}(\overline{A_{jk}} \otimes A_{jk}).
        \end{split}
    \end{equation}
\end{proof}

Let us now define the set of vectorized code words as $\mathcal{D}_\Pi,$ and study the structure that
$\hat{A}_\Pi$ acquires with respect to this subspace.

\begin{lemma}
    Define $\mathcal{D}_\Pi=\mathrm{span}\{\mathrm{vec}(\rho)|\rho=\Pi\rho\Pi, \rho\in{\cal D}(\mathcal{H})\}$ and consider the decomposition $\mathbb{C}^{n^2} = \mathcal{D}_\Pi \oplus \mathcal{D}_\Pi^\perp$.
    Let $T$ be a unitary such that for every state $\rho \in \mathcal{C}$,
    \begin{equation}
        T \rho T^\dagger = \left[\begin{array}{ c | c }
            \rho_\Pi & 0 \\ \hline
            0             & 0
        \end{array}\right].
    \end{equation}
    Then we have
    %\
\begin{equation}
        (\overline{T} \otimes T) \hat{A}_\Pi (\overline{T} \otimes T)^\dagger = \left[\begin{array}{ c | c }
            \tilde{A}_\Pi & 0 \\ \hline
            0             & 0
        \end{array}\right],
    \end{equation}
    where $\tilde{A}_\Pi$ is Hermitian
    \begin{equation}
        \tilde{A}_\Pi = \tilde{A}_\Pi^\dagger.
    \end{equation}
\end{lemma}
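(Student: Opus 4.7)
The plan rests on one simple structural observation: every Kraus-like operator $\check{A}_{jk}=\Pi N_k^{\dagger}{\cal N}(\Pi)^{-1/2}N_j\Pi$ is sandwiched by $\Pi$ on both sides, so it is supported on the code subspace in the sense that $\Pi\check{A}_{jk}\Pi=\check{A}_{jk}$. Consequently, by the defining property of $T$ (which is precisely a unitary bringing the support of $\Pi$ to the span of the first $d_C$ basis vectors), each conjugated operator takes the block form
\begin{equation}
T\check{A}_{jk}T^{\dagger}=\left[\begin{array}{c|c}\tilde A_{jk}&0\\\hline 0&0\end{array}\right],
\end{equation}
for some $d_C\times d_C$ matrix $\tilde A_{jk}$, and the adjoint--exchange property $\check{A}_{jk}^{\dagger}=\check{A}_{kj}$ is inherited as $\tilde A_{jk}^{\dagger}=\tilde A_{kj}$.

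Next, I would lift this block structure through the vectorized representation. Using $\overline{TXT^{\dagger}}=\overline{T}\,\overline{X}\,T^{\top}$ and $(\overline{T}\otimes T)^{\dagger}=T^{\top}\otimes T^{\dagger}$, a direct computation gives
\begin{equation}
(\overline{T}\otimes T)\,\hat A_{\Pi}\,(\overline{T}\otimes T)^{\dagger}=\sum_{j,k}\overline{T\check{A}_{jk}T^{\dagger}}\otimes\bigl(T\check{A}_{jk}T^{\dagger}\bigr).
\end{equation}
Each summand is the Kronecker product of two matrices whose only non-zero entries live in the top-left $d_C\times d_C$ block; hence every summand has non-zero entries only at tensor indices corresponding to $(\overline{T}\otimes T)\mathcal{D}_{\Pi}$. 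Equivalently, in the block decomposition induced on $\mathbb{C}^{n^2}$ by $\mathcal{D}_{\Pi}\oplus\mathcal{D}_{\Pi}^{\perp}$, the transformed operator has the announced form with
\begin{equation}
\tilde A_{\Pi}=\sum_{j,k}\overline{\tilde A_{jk}}\otimes\tilde A_{jk}.
\end{equation}

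Hermiticity of $\tilde A_{\Pi}$ can be obtained in two equivalent ways. The quickest route is to invoke the previous lemma: $\hat A_{\Pi}$ is self-adjoint, unitary conjugation preserves self-adjointness, and the top-left diagonal block of a Hermitian matrix is itself Hermitian. Alternatively, one can replay the proof of Lemma~1 verbatim with $\tilde A_{jk}$ in place of $\check{A}_{jk}$, using the inherited adjoint--exchange relation. The only minor technicality I anticipate is keeping the complex conjugation in $\overline{T}\otimes T$ consistent and interpreting the $2\times 2$ block notation with respect to the subspace decomposition $\mathcal{D}_{\Pi}\oplus\mathcal{D}_{\Pi}^{\perp}$ (rather than the column-major vectorization ordering, in which the non-zero entries would not appear contiguously); once this is fixed, the argument is essentially mechanical.
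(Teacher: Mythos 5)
Your proposal is correct and follows essentially the same route as the paper: conjugate each $\check{A}_{jk}$ (which satisfies $\Pi\check{A}_{jk}\Pi=\check{A}_{jk}$) into block form via $T$, carry that block structure through the Kronecker sum defining $\hat{A}_\Pi$, and deduce Hermiticity of $\tilde{A}_\Pi$ from the self-adjointness of $\hat{A}_\Pi$ together with the block structure. Your parenthetical about reading the $2\times 2$ blocks with respect to $\mathcal{D}_\Pi\oplus\mathcal{D}_\Pi^\perp$ rather than the raw column-major ordering is a valid clarification of a point the paper leaves implicit.
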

\begin{proof}
    Since $\Pi = \Pi^\dagger$, it is unitarily diagonalizable, i.e. there exists
    some $T$, such that $T T^\dagger = \mathds{1}$ and
    \begin{equation}
        T \Pi T^\dagger = \left[\begin{array}{ c | c }
            \mathds{1} & 0 \\ \hline
            0             & 0
        \end{array}\right].
    \end{equation}

    Since $\Pi \rho \Pi$ is in the support of $\Pi$, it is block-diagonalized by
    the same unitary $T$. Indeed, we have that
    \begin{equation}
        T \Pi \rho \Pi T^\dagger = T \Pi T^\dagger (T \rho T^\dagger) T \Pi T^\dagger =
        \left[\begin{array}{ c | c }
            \rho_\Pi & 0 \\ \hline
            0             & 0
        \end{array}\right].
    \end{equation}

    Similarly, for the operator $\hat{A}_\Pi$, since each of the $A_{ij}$ are in the support of $\Pi$, we have that
    \begin{equation}
        \begin{split}
            (\overline{T} \otimes T) \hat{A}_\Pi (\overline{T}^{\dagger} \otimes T^\dagger) =&
            \sum_{jk} \overline{(T A_{jk} T^\dagger)} \otimes (T A_{jk} T^\dagger) \\
            =&
            \sum_{jk} \overline{(T \Pi N_k^\dagger \mathcal{N}^{-1/2}(\Pi) N_j \Pi T^\dagger)}
            \otimes (T \Pi N_k^\dagger \mathcal{N}^{-1/2}(\Pi) N_j \Pi T^\dagger) \\
            =& \sum_{ij}
            \left[\begin{array}{ c | c }
                \overline{\tilde{A}_{jk}} & 0 \\ \hline
                0             & 0
            \end{array}\right] \otimes
            \left[\begin{array}{ c | c }
                \tilde{A}_{jk} & 0 \\ \hline
                0             & 0
            \end{array}\right] \\
            =&
            \left[\begin{array}{ c | c }
                \tilde{A}_\Pi & 0 \\ \hline
                0             & 0
            \end{array}\right].
        \end{split}
    \end{equation}
    
    Finally, Hermitianity of $\tilde{A}_\Pi$ is a direct consequence of the
    Hermitianity of $\hat{A}_\Pi$ and the block structure of
    $(\overline{T} \otimes T) \hat{A}_{\Pi} (\overline{T} \otimes T)^\dagger$.
\end{proof}
We are ready to state the main result of the section: a characterization of correctable codes in terms of the operator $\tilde{A}_\Pi$ introduce in the previous Lemma.
\begin{theorem}\label{thm:main}
    Let $d = \mathrm{rank}(\Pi)$, then, with $\tilde{A}_{\Pi}$ defined as above we have:
    \begin{enumerate}
        \item 
            $\tilde{A}_{\Pi}^\dagger \tilde{A}_\Pi \leq \mathds{1}_{d^2}$.
        \item $\mathcal{C} \sim \Pi$ is a correctable code iff
            $\tilde{A}_\Pi = \mathds{1}_{d^2}$.
    \end{enumerate}
\end{theorem}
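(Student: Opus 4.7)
The plan is to address the two items separately, with item~(1) being the substantial step. For the ``only if'' direction of~(2) I would invoke Proposition~1: the Petz map $\mathcal{R}_{\mathcal{N},\Pi}$ is a perfect recovery for a correctable $\Pi$, so $\mathcal{A}_\Pi(\rho)=\rho$ and hence $\check{\mathcal{A}}_\Pi(\rho)=\rho$ for every codeword; equivalently, $\tilde{A}_\Pi$ acts as the identity on all of $\mathcal{D}_\Pi$ and must equal $\mathds{1}_{d^2}$. For the converse, noting that each $\check{A}_{jk}=R_k N_j\Pi$ has range inside the code gives $\check{\mathcal{A}}_\Pi(\rho)=\mathcal{A}_\Pi(\rho)$ for any codeword $\rho=\Pi\rho\Pi$; hence $\tilde{A}_\Pi=\mathds{1}_{d^2}$ forces $\mathcal{A}_\Pi(\rho)=\rho$ on codewords, so Petz is a perfect recovery and $\Pi$ is $\mathcal{N}$-correctable.

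Item~(1) is the core. Since the preceding lemma shows $\tilde{A}_\Pi$ is Hermitian, the inequality $\tilde{A}_\Pi^\dagger\tilde{A}_\Pi\leq\mathds{1}_{d^2}$ reduces to all eigenvalues of $\tilde{A}_\Pi$ lying in $[-1,1]$. The first step is to establish the two ``double sub-stochasticity'' bounds
\begin{equation*}
\sum_{jk}\check{A}_{jk}^\dagger\check{A}_{jk}\leq\Pi,\qquad \sum_{jk}\check{A}_{jk}\check{A}_{jk}^\dagger\leq\Pi,
\end{equation*}
each of which drops out by collapsing one Kraus sum via $\sum_\ell N_\ell\Pi N_\ell^\dagger=\mathcal{N}(\Pi)$ together with $\mathcal{N}(\Pi)^{-1/2}\mathcal{N}(\Pi)\mathcal{N}(\Pi)^{-1/2}=\Pi_{\mathcal{N}(\Pi)}\leq\mathds{1}$, followed by the remaining resolution of identity $\sum_\ell N_\ell^\dagger N_\ell=\mathds{1}$.

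Next, since $\check{\mathcal{A}}_\Pi$ is self-adjoint and preserves Hermiticity, I would diagonalize it on the real Hilbert--Schmidt space of Hermitian operators on the code, obtaining an orthonormal basis of Hermitian eigenvectors $X_k$ with real eigenvalues $\lambda_k$; this basis also diagonalizes the complex-linear extension $\tilde{A}_\Pi$ on $\mathcal{D}_\Pi$. Writing a given Hermitian eigenvector as $X=\sum_i x_i |i\rangle\langle i|$ in its own eigenbasis inside the code, the Rayleigh quotient $\mathrm{tr}(X\,\check{\mathcal{A}}_\Pi(X))$ takes the form $x^\top M x$ with the nonnegative matrix $M_{ii'}:=\sum_{jk}|\langle i|\check{A}_{jk}|i'\rangle|^2$. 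The two bounds above translate exactly into the row and column sum estimates $\sum_i M_{ii'}\leq 1$ and $\sum_{i'}M_{ii'}\leq 1$, and a standard Cauchy--Schwarz argument on a doubly sub-stochastic nonnegative matrix yields $|x^\top M x|\leq\|x\|^2=\|X\|_2^2$, so every $|\lambda_k|\leq 1$, establishing~(1).

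The main obstacle I anticipate is precisely item~(1): a ``CPTP contracts Hilbert--Schmidt'' type bound fails in general (trace-and-replace channels already exceed operator norm $1$ in HS), so one cannot shortcut via a black-box estimate on $\hat{A}_\Pi$. The self-adjointness of $\tilde{A}_\Pi$ together with the double sub-stochasticity of the transition matrix $M$, both consequences of the Petz construction, is what makes the Cauchy--Schwarz estimate go through.
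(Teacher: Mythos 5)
Your proof is correct, and for item (1) it takes a genuinely different --- and in fact more complete --- route than the paper. The paper's own proof of (1) only derives the trace-non-increasing bound $\sum_{jk}\check{A}_{jk}^\dagger\check{A}_{jk}\leq\mathds{1}$ and then asserts, after vectorization, that $\tilde{A}_\Pi^\dagger\tilde{A}_\Pi\leq\mathds{1}_{d^2}$; as you correctly point out, that single bound does not by itself force a Hilbert--Schmidt contraction for a general trace-non-increasing CP map, so your argument supplies precisely the missing ingredient: the dual (sub-unital on the code) bound $\sum_{jk}\check{A}_{jk}\check{A}_{jk}^\dagger\leq\Pi$, which holds because of the Petz structure via $\mathcal{N}(\Pi)^{-1/2}\mathcal{N}(\Pi)\mathcal{N}(\Pi)^{-1/2}=\Pi_{\mathcal{N}(\Pi)}$, combined with self-adjointness, reduction to Hermitian eigenvectors, and the doubly sub-stochastic estimate $|x^\top Mx|\leq\|x\|^2$. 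What the paper's sketch buys is brevity; what yours buys is an actual proof of the spectral bound (equivalently, a proof that $\|\hat{A}_\Pi\|\leq\bigl\|\sum\check{A}^\dagger\check{A}\bigr\|^{1/2}\bigl\|\sum\check{A}\check{A}^\dagger\bigr\|^{1/2}$ in this setting), which is the right way to close the gap. For item (2) your argument is in substance the paper's: the forward direction uses Proposition 1 exactly as the paper implicitly does, except that you replace the paper's explicit counting of $d$ diagonal plus $d^2-d$ off-diagonal eigenvectors with eigenvalue $1$ by the cleaner observation that codeword density matrices span the whole code block, so linearity gives $\tilde{A}_\Pi=\mathds{1}_{d^2}$ at once; the converse is identical. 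One small wording slip: the identity $\check{\mathcal{A}}_\Pi(\rho)=\mathcal{A}_\Pi(\rho)$ on codewords follows from the input-side projector, i.e.\ $\check{A}_{jk}=A_{jk}\Pi$ and $\Pi\rho\Pi=\rho$, not from the range of $\check{A}_{jk}$ lying in the code --- but this does not affect the validity of the step.
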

\begin{proof}
    From the trace non-increasing condition one has that
    \begin{equation}
        \sum_{jk} A_{jk}^\dagger A_{jk} \leq \mathds{1}.
    \end{equation}
    Each of the summands $A_{jk}^\dagger A_{jk}$ is positive
    semi-definite. Therefore, all of their eigenvalues are non-negative. At the
    same time, to comply with the previous condition, it must also be the case
    that
    \begin{equation}
        \bra{\psi} (\sum_{jk} A_{jk}^\dagger A_{jk}) \ket{\psi} \leq 1.
    \end{equation}

    Putting this together with the non-negativity of the eigenvalues, one can
    conclude that
    \begin{equation}
        0 \leq \bra{\psi} (A_{jk}^\dagger A_{jk}) \ket{\psi} \leq 1,
    \end{equation}
    and combining the last two expressions, one has that
\begin{equation}
        0 \leq \bra{\psi} (\sum_{jk} A_{jk}^\dagger A_{jk}) \ket{\psi} \leq 1.
    \end{equation}
    Since this is true $\forall \ket{\psi}$, vectorizing and using the
    same transformation $T$ from the previous proposition, it is easy to see that
    \begin{equation}
        \tilde{A}_\Pi^\dagger \tilde{A}_\Pi \leq \mathds{1}_{d^2}.
    \end{equation}

    To prove the second part, it is easy to see that
    $\mathrm{rank}(\hat{A}_\Pi) \leq d^2$ due to the projector
    $\Pi$, which has $\mathrm{rank}(\Pi) = d$ and the fact that
    $\mathrm{rank}(A \otimes B) = \mathrm{rank}(A) \mathrm{rank}(B)$. This means that at
    least $n^2 - d^2$ of the eigenvalues of $\hat{A}_\Pi$ are zero.

    For the study of the rest of the eigenvalues, consider $\{\ket{\psi_k}\}$
    such that $\ketbra{\psi_k}{\psi_k} \in \mathcal{C}$.
    Then $\overline{\ket{\psi_a}} \otimes \ket{\psi_b} \notin \mathrm{ker}(\overline{\Pi} \otimes \Pi)$
    and $\mathrm{span}(\{\overline{\ket{\psi_a}} \otimes \ket{\psi_b}\}) \supseteq \mathrm{Im}(\hat{A}_\Pi)$.

    If the code is correctable, one has that
    ${\cal A}(\ketbra{\psi_a}{\psi_a}) = \ketbra{\psi_a}{\psi_a}$,
    \begin{equation}
        \Rightarrow \overline{\ket{\psi_a}} \otimes \ket{\psi_a} =
        (\sum_{jk} \overline{A_{jk}} \otimes A_{jk}) \overline{\ket{\psi_a}} \otimes \ket{\psi_a}.
    \end{equation}

    Therefore, there are $d$ eigenstates with eigenvalue $1$.

    Now consider the following two states
    \[
        \rho = (\lvert \alpha \rvert^2 \ketbra{\psi_a}{\psi_a}
        + \lvert \beta \rvert^2 \ketbra{\psi_b}{\psi_b}),
    \]
    \[
        \sigma = (\alpha \ket{\psi_a} + \beta \ket{\psi_b})
        (\overline{\alpha} \bra{\psi_a} + \overline{\beta} \bra{\psi_b}),
    \]
    with $\alpha, \beta \in \mathbb{C}$, $|\alpha|^2 + |\beta|^2 = 1$. It is
    easy to see that
    $\sigma = \rho + (\alpha \overline{\beta} \ketbra{\psi_a}{\psi_b} + \overline{\alpha} \beta \ketbra{\psi_b}{\psi_a})$.

    By linearity, one gets that $\mathcal{A}(\rho) = \rho$ and thus
    $\hat{A}_\Pi \mathrm{vec}(\rho) = \mathrm{vec}(\rho)$.

    Then
    $\mathcal{A}(\sigma) = \rho + \mathcal{A}(\alpha \overline{\beta} \ketbra{\psi_a}{\psi_b} + \overline{\alpha} \beta \ketbra{\psi_b}{\psi_a})$
    and
    \begin{equation*}
    \begin{split}
    \hat{A}_\Pi \mathrm{vec}(\sigma) &=
    \mathrm{vec}(\rho) + \hat{A}_\Pi \mathrm{vec}(\alpha \overline{\beta} \ketbra{\psi_a}{\psi_b} + \overline{\alpha} \beta \ketbra{\psi_b}{\psi_a}) \\
    &= \mathrm{vec}(\rho) + \hat{A}_\Pi (\alpha \overline{\beta} \overline{\ket{\psi_b}} \otimes \ket{\psi_a} + \overline{\alpha} \beta \overline{\ket{\psi_a}} \otimes \ket{\psi_b}).
    \end{split}
    \end{equation*}

    Hence, for $\mathcal{C}$ to be a correctable code, one would also need that
    $\hat{A}_\Pi \overline{\ket{\psi_b}} \otimes \ket{\psi_a} = \overline{\ket{\psi_b}} \otimes \ket{\psi_a}$,
    Since it must be true for any $a, b$, it results in $d^2 - d$ eigenstates having eigenvalue equal to $1$.

    Putting everything together, one sees that $\tilde{A}_\Pi$ has all of its
    eigenvalues equal to $1$, and being Hermitian, this also implies that it is
    equal to the identity. Proving the other direction of the implication is straightforward, since if $\tilde{A}_\Pi = \mathds{1}$, for any state in
    the code it would yield $\mathcal{A}_\Pi(\rho) = \rho$, due to the previous
    Lemma.
\end{proof}

\subsection{The code optimization problem}

In the light of Theorem \ref{thm:main}, it is natural to consider the following quality
index for the correctability of $\Pi$,
\begin{equation}
    J(\Pi) = \mathrm{tr}(\tilde{A}_\Pi) = \sum_k \mathrm{eigs}(\hat{A}_\Pi)_k.
    \label{eq:cost}
\end{equation}

The following proposition connects the correctability of a code to its optimality in terms of
$J(\Pi)$.

\begin{proposition}
    The function $J(\Pi)$ is non-negative and upper-bounded by the square
    of the rank of the projector $\Pi$,
    \begin{equation}
        0 \leq J(\Pi) \leq d^2.
    \end{equation}

    Furthermore, the code
    $\mathcal{C} \sim \Pi$ is a correctable code iff $J(\Pi) = d^2$.
\end{proposition}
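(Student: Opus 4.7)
The plan is to derive all three claims as direct corollaries of Theorem~\ref{thm:main} together with the trace identity $\mathrm{tr}(\tilde{A}_\Pi) = \mathrm{tr}(\hat{A}_\Pi)$; this holds because conjugation by $\overline{T}\otimes T$ is unitary, hence trace-preserving, and the remaining blocks of $(\overline{T}\otimes T)\hat{A}_\Pi(\overline{T}\otimes T)^\dagger$ vanish by the preceding Lemma.

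Non-negativity will come from a direct computation: using $\mathrm{tr}(A\otimes B) = \mathrm{tr}(A)\mathrm{tr}(B)$ and $\mathrm{tr}(\overline{M}) = \overline{\mathrm{tr}(M)}$, the trace identity yields
\begin{equation*}
J(\Pi) \;=\; \mathrm{tr}(\hat{A}_\Pi) \;=\; \sum_{j,k}|\mathrm{tr}(\check{A}_{jk})|^2 \;\geq\; 0.
\end{equation*}
For the upper bound I would combine the Hermiticity of $\tilde{A}_\Pi$ (already established in the Lemma) with the contraction inequality $\tilde{A}_\Pi^\dagger\tilde{A}_\Pi \leq \mathds{1}_{d^2}$ from part~(1) of Theorem~\ref{thm:main}: Hermiticity turns this into $\tilde{A}_\Pi^2 \leq \mathds{1}_{d^2}$, so every eigenvalue of the $d^2\times d^2$ matrix $\tilde{A}_\Pi$ lies in $[-1,1]$, and the sum of the $d^2$ eigenvalues is therefore bounded above by $d^2$.

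Finally, the characterization of correctability is immediate in both directions. If $\tilde{A}_\Pi = \mathds{1}_{d^2}$, substitution gives $J(\Pi)=d^2$; conversely, saturation $J(\Pi) = d^2$ together with the eigenvalue bound $\lambda_i \leq 1$ forces every $\lambda_i = 1$, hence $\tilde{A}_\Pi = \mathds{1}_{d^2}$, and part~(2) of Theorem~\ref{thm:main} delivers correctability. The hard part is essentially absent: the proposition is a corollary of Theorem~\ref{thm:main} plus the one-line observation that the trace of $\hat{A}_\Pi$ matches that of its compressed block $\tilde{A}_\Pi$.
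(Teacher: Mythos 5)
Your proposal is correct and follows essentially the same route as the paper: non-negativity via the direct computation $\mathrm{tr}(\hat{A}_\Pi)=\sum_{j,k}|\mathrm{tr}(\check{A}_{jk})|^2$, the upper bound from item (1) of Theorem \ref{thm:main} combined with Hermiticity of $\tilde{A}_\Pi$, and the correctability equivalence from the eigenvalue saturation argument plus item (2) of the theorem. Your explicit remark that $\mathrm{tr}(\tilde{A}_\Pi)=\mathrm{tr}(\hat{A}_\Pi)$ by unitarity of the conjugation and the vanishing off-blocks is a small clarification the paper leaves implicit, but the substance of the argument is the same.
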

\begin{proof}
    The $J(\Pi) \leq d^2$ part is a direct consequence of the first item of
    Theorem \ref{thm:main}, together with the fact that $\tilde{A}$ is Hermitian. The
    following proves the $J(\Pi) \geq 0$ part,
    \begin{equation}
        \begin{split}
            \mathrm{tr}(\hat{A}_\Pi) &= \sum_{kl}
            \mathrm{tr}(\overline{\Pi N_l^\dagger \mathcal{N}(\Pi)^{-1/2} N_k}) \\
            &\qquad \mathrm{tr}(\Pi N_l^\dagger \mathcal{N}(\Pi)^{-1/2} N_k) \\
            &= \sum_{kl} |\mathrm{tr}(\Pi N_l^\dagger \mathcal{N}(\Pi)^{-1/2} N_k \Pi)|^2 \geq 0.
        \end{split}
    \end{equation}

    Finally, using the results of the previous theorem, $J(\Pi) = d^2$ iff all the
    eigenvalues of $\tilde{A}$ are equal to $1$.
\end{proof}

Next, we try to relate this cost functional to existing QEC quantities.
To this aim, we define a {\em code-restricted} {fidelity}.
Let ${\cal E}: {\cal B}({\cal H}) \rightarrow {\cal B}({\cal H})$ be a quantum operation,
i.e. a trace non-increasing CP map from ${\cal B}({\cal H})$ to itself. Then let
${\cal C} \subset {\cal H}$ be a subspace code and
$\ket{\Omega_{\cal C}} = \sum_j \ket{j_{\cal C}} \otimes \ket{j_{\cal C}}/ \sqrt{d_{\cal C}}$,
where $\{\ket{j_{\cal C}}\}_{j=1}^{d_{\cal C}}$ is an orthonormal basis for the code and $d_{\cal C}$ its dimension. 

\begin{definition}
    We define the {\em code-restricted operation fidelity} (CRO) as 
\begin{equation}
F_O^{\cal C}({\cal E}) =
\bra{\Omega_{\cal C}} (\one \otimes {\cal E})(\ketbra{\Omega_{\cal C}}{\Omega_{\cal C}})
\ket{\Omega_{\cal C}}
\end{equation}
\end{definition}
Notice that this definition can be seen as a specialization of the channel fidelity:
let ${\cal E}: {\cal D}({\cal H}) \rightarrow {\cal D}({\cal H})$ be a quantum channel,
i.e. a CPTP map from ${\cal D}({\cal H})$ to itself, and
$\ket{\Omega} = \sum_j \ket{j} \otimes \ket{j}/ \sqrt{d}$,
where $\{\ket{j}\}_{j=1}^{d}$ is an orthonormal basis for the Hilbert space ${\cal H}$; then
the channel fidelity is defined as:
\begin{equation}
F_C({\cal E}) =
\bra{\Omega} (\one \otimes {\cal E})(\ketbra{\Omega}{\Omega})
\ket{\Omega}.
\end{equation}

\begin{lemma}
Let $\{E_j\}_j$ be the Kraus representation the quantum operation ${\cal E}$.
If ${\cal C} \subseteq {\rm supp}({\cal E})$, then
\begin{equation}
F_O^{\cal C}(\mathcal{E}) = \sum_{j} \frac{1}{d^2} | \mathrm{tr}(E_j) |^2 
\end{equation}
\end{lemma}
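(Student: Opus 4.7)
The plan is a direct calculation using the Kraus representation of $\cal E$ together with the standard ``transpose trick'' for a maximally entangled state, restricted here to the code.

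First I would insert $(\one \otimes {\cal E})(\ketbra{\Omega_{\cal C}}{\Omega_{\cal C}}) = \sum_k (\one \otimes E_k) \ketbra{\Omega_{\cal C}}{\Omega_{\cal C}}(\one \otimes E_k^\dagger)$ into the definition of $F_O^{\cal C}$; the sandwich then factorises into a sum of squared moduli,
\begin{equation*}
F_O^{\cal C}({\cal E}) = \sum_k \bigl| \bra{\Omega_{\cal C}} (\one \otimes E_k) \ket{\Omega_{\cal C}} \bigr|^2.
\end{equation*}

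For each $k$, expanding $\ket{\Omega_{\cal C}} = d_{\cal C}^{-1/2} \sum_j \ket{j_{\cal C}}\otimes\ket{j_{\cal C}}$ and contracting the first tensor factor using orthonormality collapses the double sum to a diagonal sum, which is exactly a trace of $E_k$ restricted to the code:
\begin{equation*}
\bra{\Omega_{\cal C}} (\one \otimes E_k) \ket{\Omega_{\cal C}} = \frac{1}{d_{\cal C}} \sum_i \bra{i_{\cal C}} E_k \ket{i_{\cal C}} = \frac{1}{d_{\cal C}} \mathrm{tr}(\Pi_{\cal C} E_k \Pi_{\cal C}),
\end{equation*}
with $\Pi_{\cal C}$ the projector onto the code.

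The one non-routine step is to promote this code-restricted trace to the full ambient trace $\mathrm{tr}(E_k)$ appearing on the right-hand side of the statement. This is where the hypothesis ${\cal C} \subseteq \mathrm{supp}({\cal E})$ enters: I read it as forcing $\Pi_{\cal C} E_k \Pi_{\cal C} = E_k$ for every Kraus operator, so that the two traces coincide. This reading is consistent with the setting of the preceding subsection, where the relevant operators $\check A_{jk}$ are already sandwiched by $\Pi$ on both sides. Squaring and summing then yields the stated identity with $d = d_{\cal C}$. The main thing to pin down is therefore the precise interpretation of the support condition; the analysis itself is just the one-line ``transpose trick'' computation.
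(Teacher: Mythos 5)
Your proposal is correct and takes essentially the same route as the paper's own proof, which is the same direct computation with the code-restricted maximally entangled state: expand $\ket{\Omega_{\cal C}}$, contract using orthonormality, and identify $\sum_x \bra{x_{\cal C}} E_j \ket{x_{\cal C}}$ with $\mathrm{tr}(E_j)$. The only differences are cosmetic—the paper keeps all sums together instead of first factorising into $\sum_k \bigl|\bra{\Omega_{\cal C}}(\one\otimes E_k)\ket{\Omega_{\cal C}}\bigr|^2$—and it performs silently the promotion of the code-restricted trace $\mathrm{tr}(\Pi_{\cal C} E_k \Pi_{\cal C})$ to $\mathrm{tr}(E_k)$ that you correctly flag as the point where ${\cal C}\subseteq\mathrm{supp}({\cal E})$ enters, your reading being consistent with the lemma's use in the following proposition, where the Kraus operators $\check{A}_{jk}=R_k N_j \Pi$ are already sandwiched by the code projector.
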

\begin{proof}
By direct calculation
\begin{equation*}
\begin{split}
F_O^{\cal C}(\mathcal{E}) =&
\bra{\Omega_{\cal C}} (\one \otimes {\cal E})[\ketbra{\Omega_{\cal C}}{\Omega_{\cal C}}] \ket{\Omega_{\cal C}} \\
=& \frac{1}{d_{\cal C}^2} \sum_{jklxy} \bra{x_{\cal C}} \otimes \bra{x_{\cal C}}
(\one \otimes E_j)
(\ket{k_{\cal C}} \otimes \ket{k_{\cal C}})
(\bra{l_{\cal C}} \otimes \bra{l_{\cal C}})
(\one \otimes E_j^\dagger)
\ket{y_{\cal C}} \otimes \ket{y_{\cal C}} \\
=& \frac{1}{d_{\cal C}^2} \sum_{jklxy} \delta_{xk} \delta_{ly} \bra{x_{\cal C}}
E_j \ket{k_{\cal C}} \bra{l_{\cal C}}
E_j^\dagger \ket{y_{\cal C}} \\
=& \frac{1}{d_{\cal C}^2} \sum_{jlx} \bra{x_{\cal C}}
E_j \ket{x_{\cal C}} \bra{l_{\cal C}}
E_j^\dagger \ket{l_{\cal C}} \\
=& \frac{1}{d_{\cal C}^2} \sum_{j} {\rm tr}(E_j)
\overline{{\rm tr}(E_j)} \\
=& \sum_{j} \frac{1}{d_{\cal C}^2} | \mathrm{tr}(E_j) |^2
\end{split}
\end{equation*}
\end{proof}

\begin{proposition}
    The cost function $J_d(\Pi)$ with $d$ fixed is proportional to the CRO
    fidelity of $\mathcal{R}_{{\cal N}, \Pi} \circ \mathcal{N} \circ {\cal P}_\Pi$.
    \begin{equation}
        J_d(\Pi) = d^2 F_O^{\cal C}(\mathcal{R}_{\mathcal{N}, \Pi} \circ \mathcal{N} \circ {\cal P}_\Pi).
    \end{equation}
\end{proposition}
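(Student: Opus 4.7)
The plan is to show the identity by a direct computation: rewrite $J_d(\Pi)$ as a sum of moduli squared of traces of the Kraus operators $\check{A}_{jk}$, and then recognize that $\mathcal{R}_{\mathcal{N},\Pi}\circ\mathcal{N}\circ\mathcal{P}_\Pi$ is precisely the channel whose Kraus operators are the $\check{A}_{jk}$, so that the preceding lemma closes the argument.

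First I would reduce $J_d(\Pi) = \mathrm{tr}(\tilde{A}_\Pi)$ to a trace of $\hat{A}_\Pi$. Since, by the previous lemma, $\tilde{A}_\Pi$ is the only nonzero block of $(\overline{T}\otimes T)\hat{A}_\Pi(\overline{T}\otimes T)^\dagger$, invariance of the trace under unitary conjugation gives $\mathrm{tr}(\tilde{A}_\Pi) = \mathrm{tr}(\hat{A}_\Pi)$. Then, using $\hat{A}_\Pi = \sum_{jk}\overline{\check{A}_{jk}}\otimes \check{A}_{jk}$ together with $\mathrm{tr}(X\otimes Y) = \mathrm{tr}(X)\mathrm{tr}(Y)$, one immediately obtains
\begin{equation*}
J_d(\Pi) \;=\; \sum_{jk}\overline{\mathrm{tr}(\check{A}_{jk})}\,\mathrm{tr}(\check{A}_{jk}) \;=\; \sum_{jk}|\mathrm{tr}(\check{A}_{jk})|^2 .
\end{equation*}
This step is essentially the same calculation already used in the preceding proposition to establish $J(\Pi)\ge 0$.

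Next I would identify the Kraus representation of the composed operation. Composing $\mathcal{P}_\Pi$, $\mathcal{N}\sim\{N_j\}$, and $\mathcal{R}_{\mathcal{N},\Pi}\sim\{R_k\}$ in that order yields the two-index operator sum with Kraus operators $R_k N_j \Pi = \Pi N_k^\dagger \mathcal{N}(\Pi)^{-1/2} N_j \Pi = \check{A}_{jk}$, which coincides exactly with the definition \eqref{eq:Acheck} of $\check{\mathcal{A}}_\Pi$.

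Finally, I would apply the preceding lemma to $\mathcal{E} = \mathcal{R}_{\mathcal{N},\Pi}\circ\mathcal{N}\circ\mathcal{P}_\Pi$ with Kraus operators $\{\check{A}_{jk}\}_{jk}$. The support hypothesis $\mathcal{C}\subseteq\mathrm{supp}(\mathcal{E})$ is automatic here because each $\check{A}_{jk}$ already satisfies $\check{A}_{jk} = \Pi\check{A}_{jk}\Pi$, so $\mathrm{tr}(\check{A}_{jk})$ and the code-restricted trace coincide. The lemma then gives $F_O^{\cal C}(\mathcal{E}) = d^{-2}\sum_{jk}|\mathrm{tr}(\check{A}_{jk})|^2$, and combining with the expression derived above produces $J_d(\Pi) = d^2 F_O^{\cal C}(\mathcal{R}_{\mathcal{N},\Pi}\circ\mathcal{N}\circ\mathcal{P}_\Pi)$, as required. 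No step presents a real obstacle; the only point that warrants care is the bookkeeping between single- and double-index Kraus families and verifying that the trace-invariance argument legitimately removes the block structure introduced by $T$.
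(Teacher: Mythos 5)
Your proposal is correct and follows essentially the same route as the paper: reduce $J_d(\Pi)=\mathrm{tr}(\tilde{A}_\Pi)$ to $\sum_{jk}|\mathrm{tr}(\check{A}_{jk})|^2$ via the block structure and $\mathrm{tr}(X\otimes Y)=\mathrm{tr}(X)\mathrm{tr}(Y)$, identify $\check{A}_{jk}=R_kN_j\Pi$ as the Kraus operators of $\mathcal{R}_{\mathcal{N},\Pi}\circ\mathcal{N}\circ\mathcal{P}_\Pi$, and invoke the preceding lemma expressing $F_O^{\cal C}$ as $d^{-2}\sum|\mathrm{tr}(E_j)|^2$. Your version merely makes explicit two points the paper leaves implicit (the trace invariance under conjugation by $\overline{T}\otimes T$ and the support hypothesis of the lemma), which is fine.
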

\begin{proof}
    By direct calculation,
    \begin{equation*}
        \begin{split}
            J_d(\Pi) &= \mathrm{tr}(\tilde{A})
            = \sum_{jk} \mathrm{tr}(\overline{A_{jk}} \otimes A_{jk}) \\
            &= \sum_{jk} \overline{\mathrm{tr}(A_{jk})} \mathrm{tr}(A_{jk})
            = \sum_{jk} | \mathrm{tr}(A_{jk}) |^2 \\
            &= \sum_{jk} | \mathrm{tr}(R_k N_j \Pi) |^2
            = d^2 \sum_{jk} \frac{1}{d^2} | \mathrm{tr}(R_k N_j \Pi) |^2 \\
            &= d^2 F_O^{\cal C}(\mathcal{R}_{{\cal N}, \Pi} \circ \mathcal{N} \circ {\cal P}_\Pi).
        \end{split}
    \end{equation*}
\end{proof}

Consider the following
optimization problem with fixed $d = {\rm rank}(\Pi)$,
\begin{equation}
    \Pi_C = \underset{\Pi}{\mathrm{argmax}} J_d(\Pi).
\end{equation}

Since $d$ is fixed, one could also decompose the projector onto the code as
$\Pi = U U^\dagger$, where $U \in \mathbb{C}^{n \times d}$ and $U^\dagger U = \mathds{1}_d$.
This decomposition leads one to perform optimization on the complex valued
Stiefel manifold $V_d(\mathbb{C}^n)$, since $U \in V_d(\mathbb{C}^n)$:
\begin{equation}
    \Pi_U = \underset{U \in V_d(\mathbb{C}^n)}{\mathrm{argmax}} J_d(\Pi)|_{\Pi = U U^\dagger}.
    \label{eq:optimprob}
\end{equation}

In the following section we shall focus on the Stiefel manifold $V_d(\mathbb{C}^n),$ its definition, its parametrization,
and methods to solve this optimization problem. Furthermore, in what follows we will implicitly use the notation
$J_d(U) = J_d(\Pi)|_{\Pi = U U^\dagger}$, and $J(U) = J_2(U)$.

\subsection{Regularization}

As we shall see in the examples, if a correctable code exists, it is typically
not unique. In order to obtain codes that are simpler to implement and
interpret, we introduce $\ell_1$ regularization terms in the optimization
functional \cite{murphyMachineLearningProbabilistic2013}. Intuitively, 
$\ell_1$ regularization is based on the geometry of $\ell_1$ balls
$B_{\ell_1}(r) = \{U \in V_d(\mathbb{C}^n)\ | \ \|U\|_1 \leq r\}.$ Their boundaries are not
smooth and have extreme points, corresponding to sparse elements in the set, with many zero entries.
For this reason,
addition of a (small) term proportional to $\|U\|_1$ to the cost function promotes
sparsity of the solutions of the optimization problem by selecting, on a given level set of a cost function,  extreme points of balls with smaller radius. In the following, a
regularization terms is added on the norm of the point $U$ of the Stiefel
manifold, in order to increase the sparsity of the code, encouraging the
convergence towards codes simpler distributions of amplitudes (with respect to a
suitable basis). The final objective function is the following,
\begin{equation}
    \label{eq:regcost}
    J_d^{\mathrm{reg}}(U) = J_d(U) + \lambda \|U\|_1,
\end{equation}
where $\|U\|_1 = \sum_{jk} |U_{jk}|$ and where $\lambda \geq 0$ is the regularization hyperparameter.

Notice, however, that the addition of regularization may have a detrimental effect if no sparse correctable
code exists and the hyperparameter $\lambda$ is too large. In general, there is a trade-off between sparsity
and performance. The optimal choice for the value of $\lambda$ is dependent on the optimization problem, and more specifically on the noise model. Therefore, careful tuning is necessary.
An example of this effect is given in Section \ref{sec:examples}.

\subsection{The Recovery Optimization Problem}

As mentioned above, when the noise is not perfectly correctable
the Petz recovery map is not optimal, but can be shown to be near-optimal under suitable assumptions \cite{barnumReversingQuantumDynamics2002}. In order to improve the fidelity preservation,
an additional optimization step may be performed: assuming the code ${\cal C} \sim \Pi$ to be fixed, one can then optimize
over the recovery map, using ${\cal R}_{{\cal N}, \Pi}$ as starting point. In fact, Riemannian optimization
on the Stiefel manifold may also be used in this step.

In order to do so, first parametrize the recovery map using Stinespring's isometry
\cite{wolfQuantumChannelsOperations2012}, i.e.
\begin{equation}
\hat{R} = \sum_{j=0}^{r-1} \ket{j} \otimes R_{j+1} = \begin{pmatrix}
R_1 \\
R_2 \\
\vdots \\
R_r
\end{pmatrix},
\end{equation}
where ${\cal R} \sim \{R_j\}$ is the Kraus representation of the recovery map. One can then verify that
$\sum_{j=1}^r R_j^\dagger R_j = \hat{R}^\dagger \hat{R} = \one_n$, i.e. $\hat{R}$ is a point of the Stiefel
manifold $V_n(\mathbb{C}^{r n})$.

Finally, by fixing $r$ and with a slight abuse of notation, we can adapt the cost function \eqref{eq:cost} as
\begin{equation}
\label{eq:cost_recovery}
J_{d,r}(\hat{R}) = \sum_{jk} | {\rm tr}(R_j N_k \Pi) |^2,
\end{equation}
and solve the following problem
\begin{equation}
    \hat{R}_U = \underset{\hat{R} \in V_n(\mathbb{C}^{r n})}{\mathrm{argmax}} J(\hat R).
    \label{eq:optimprob2}
\end{equation}
In what follows, we will set $r$ equal to the number of Kraus operator in the chosen noise representation, which is also the number of operators in thecPetz recovery.

\subsection{Existing optimization-based approaches to QEC: a brief review and comparison}\label{sec:comparison}

The first work using optimization algorithms in the search for QEC codes is \cite{reimpellIterativeOptimizationQuantum2005}, where an adapted version of the
power method algorithm has been proposed to maximize the
channel fidelity. The algorithm iterates alternate optimization of the encoding and the decoding, parametrized as CP maps, and
normalizes to obtain valid quantum channels at each iteration. This algorithm
is guaranteed to converge to local maxima if the Kraus rank of the encoding and
decoding channels is allowed to be large enough.

Then, a line of work focused on the use of semi-definite programming (SDP) to
optimize the Choi representation of the recovery channel can be traced back to
\cite{yamamotoSuboptimalQuantumerrorcorrectingProcedure2005}, in which they
assume fixed encodings and aim to maximize the minimum fidelity. A similar
scheme is proposed in \cite{fletcherOptimumQuantumError2007}, where they
substitute the minimum fidelity by the entanglement fidelity between the
initial and the recovered states, in order to simplify the problem. However,
this simplification has the downside that it makes the problem dependent on the
initial state of the system, and thus less general.

 In \cite{kosutRobustQuantumError2008}, an approximate recovery is used, and unitary encoding is assumed (by the inclusion of an
ancillary system). Then, they find the optimal encoding by optimizing the
distance, induced by the Frobenius norm, between the encoding-noise-recovery
map and the identity map. The resulting optimization problem is not convex:
in order to use SDP, they relax the unitarity condition and apply the singular
value decomposition to the result of the relaxed optimization problem. A
similar approach is followed in \cite{kosutQuantumErrorCorrection2009}, but the
optimization is performed on both encoding and decoding by iteratively fixing
one of the two in order to minize the channel fidelity, as in
\cite{reimpellIterativeOptimizationQuantum2005}.

All the previously mentioned works require a good model of the noise. More recent work
try to eliminate this need by resorting to data-driven methods. For example, in \cite{nautrupOptimizingQuantumError2019} a reinforcement learning scheme is proposed to find optimum codes. Here
they restrict the class of quantum codes to surface codes, which is a class of
quantum codes with practical implementations, and the noise is then assumed to
be a black box, together with the recovery operation. Then, they find the
optimal surface code that minimizes the error probability. Finally, in
\cite{zorattiImprovingSpeedVariational2023} instead of relying on classical
optimization algorithms, they use a variational quantum algorithm to find the
optimal code and recovery. However, the proposed algorithm requires the
specification an initial state, like in \cite{fletcherOptimumQuantumError2007}.

With respect to ours, none of the above methods has a way to promote codes with simple descriptions, and all the algorithms that are able to optimize both the encoding and the recovery rely on iterative methods. Our method also presents an advantage in the parameter scaling: if the system Hilbert space ${\cal H}$ is of
dimension $n$, and the subspace code ${\cal C}$ of dimension $d$, existing optimization methods employ representations of dimension $n \times n$ or 
$n^2 \times n^2,$ if Choi's representation is used \cite{yamamotoSuboptimalQuantumerrorcorrectingProcedure2005}. In our method, the optimization variables are represented by
matrices of dimension $n \times d$. In terms of computation time, the lack of iteration and the simpler parametrization ensure a strong advantage, especially when perfectly correctable codes do not exist: as an example, in the amplitude damping case discussed in Section \ref{sec:examples}, with the full error model our Riemannian methods provide a result in 487 seconds on a commercial laptop, while the iterative SDP takes approximately one day to converge.

\section{A solution using Riemannian optimization}
\label{sec:numerical}
A numerical approach towards the solution of the optimization problems we formulated may be
pursued using geometric ideas: a subspace can be associated to a projector, which in turn can be
expressed in terms of an orthonormal k-frame. Hence, the set of subspace codes
is encoded in the complex-valued Stiefel manifold \cite{satoRiemannianOptimizationIts2021}.  This observation opens the
door to the use of new optimization algorithms for the search of error correcting
codes. The Stiefel manifold is a Riemannian manifold, therefore one
can pursue unconstrained optimization on the full manifold, which naturally embeds the  constraints on the optimization parameters, instead of
performing constrained Euclidean optimization on $\mathbb{C}^{n \times d}$. We shall focus on gradient-based methods, where the gradient now has to be computed with respect to the Riemannian structure.

In the following we recall some basic facts regarding Stiefel manifolds and related optimization algorithms.

\subsection{Stiefel manifold and Riemannian gradient}
The complex valued Stiefel manifold is defined as
\begin{equation}
V_d(\mathbb{C}^n) = \{U \in \mathbb{C}^{n \times d} \ | \ U^\dagger U = \mathds{1}_d\},
\end{equation}
as a submanifold of $\mathbb{C}^{n \times d}$, where $n \geq d$.  It is easy to show
that $U U^\dagger = \Pi$ is a projector. This allows us to perform optimization directly
on $V_d(\mathbb{C}^n),$ hence automatically preserving the required geometrical properties
of the code.

For any $U, V \in V_d(\mathbb{C}^n)$, there exists a smooth function
$X : [0,1] \rightarrow V_d(\mathbb{C}^n)$ such that $X(0) = U$ and $X(1) = V$.
Considering the constraint $X(t)^\dagger X(t) = \mathds{1}_d$, we have that
\begin{equation*}
    \frac{d}{dt} (X(t)^\dagger X(t)) = \frac{d X(t)^\dagger}{dt} X(t) +
    X(t)^\dagger \frac{d X(t)}{dt} = 0_{d,d},
\end{equation*}
which defines the tangent vector space at $U \in V_d(\mathbb{C}^n)$,
\begin{equation}
    \mathcal{T}_U V_d(\mathbb{C}^n) = \{X \in \mathbb{C}^{n \times d} \vert
    X^\dagger U + U^\dagger X = 0_{d,d} \}.
\end{equation}

Clearly, in the case when $d = 1$,
$\mathcal{T}_U V_1(\mathbb{C}^n)$ is a set of orthogonal vectors (with respect to
the Euclidean inner product) to $U$. Consider the following maps,
\begin{equation}
    \pi_U(X) = X - U \mathrm{Sym}(U^\dagger X),
\end{equation}
\begin{equation}
    \pi_U^\perp(X) = U \mathrm{Sym}(U^\dagger X),
\end{equation}
for $X \in \mathbb{C}^{n \times d}$, where $\mathrm{Sym}(Z) = (Z + Z^\dagger)/2$
for $Z \in \mathbb{C}^{d \times d}$. From the
definition, $X = \pi_U(X) + \pi_U^\perp(X)$ for any $X \in \mathbb{C}^{n \times d}$
and $\pi_U(U) = 0_{n,d}$.
It is also easy to see that $\pi_U(X)^\dagger U + U^\dagger \pi_U(X) = 0_{n,d}$ for all
$X \in \mathbb{C}^{n \times d}$, i.e. that $\pi_U(X) \in {\cal T}_U V_d(\mathbb{C}^n)$.
One can then verify that $\pi_U(X) = X\ \forall X \in {\cal T}_U V_d(\mathbb{C}^n)$,
and that $\pi_U \circ \pi_U = \pi_U$: hence,
$\pi_U$ is a projection onto $\mathcal{T}_U V_d(\mathbb{C}^n)$.
We also have that
\[
    g_U(\pi_U(X), \pi_U^\perp(X)) = 0 \ \forall X \in \mathbb{C}^{n \times d},
\]
where $g_U$ is an inner product at $U \in V_d(\mathbb{C}^n)$ defined as
\begin{equation}
    g_U(X,Y) = \mathrm{tr}_{\mathbb{R}}(X^\dagger (\mathds{1}_n - \frac{1}{2} U U^\dagger) Y),
    \label{eq:caninprod}
\end{equation}
for $X, Y \in \mathbb{C}^{n \times d}$. Such $g_U$ is called the {\em canonical inner product}, see Section 2.4 of
\cite{edelmanGeometryAlgorithmsOrthogonality1998}. Note that $U \in V_d(\mathbb{C}^n)$ and
$X \in \mathcal{T}_U V_d(\mathbb{C}^n)$ are orthogonal with respect to the canonical
inner product (\ref{eq:caninprod}), but this does not imply that
$U^\dagger X = 0_{d,d}$ in general.

In order to adapt gradient-based optimization methods to the Stiefel manifold,
we need to use the Riemannian gradient and to find an update rule $R_{U}$
whose codomain is $V_d(\mathbb{C}^n)$.

The Riemannian gradient  of $J_d$ at $U \in V_d(\mathbb{C}^n)$
is given by \cite{satoRiemannianOptimizationIts2021}:
\begin{equation}
    \mathrm{grad} J_d(U) = \nabla_U J_d(U) - \pi_U^\perp(\nabla_U J_d(U)),
\end{equation}
where $\nabla_U J_d(U)$ is the usual Euclidean gradient. The explicit computation of
$\nabla_U J_d(U)$ is a key result of our work, but being fairly long and technical has been deferred to  \ref{chp:gradient}. It is easy to
verify that $\mathrm{grad} J_d(U) \in \mathcal{T}_U V_d(\mathbb{C}^n)$.
Therefore, the solution of the ordinary differential equation
\begin{equation}
    \frac{d}{dt} U(t) = \mathrm{grad} J_d(U(t)),
\end{equation}
with $U(0) \in V_d(\mathbb{C}^n)$ always lies on $V_d(\mathbb{C}^n)$ and if
the steady state solution exists, it is a local optimum of $J_d$.
To describe these dynamics of the Stiefel manifold, it is useful to introduce the
exponential map. The exponential map on $V_d(\mathbb{C}^n)$ with respect to the
canonical inner product (\ref{eq:caninprod}) defined as follows.

\begin{definition}[Exponential map]
    A map
    \begin{equation}
        \exp_U : \mathcal{T}_U V_d(\mathbb{C}^n) \rightarrow V_d(\mathbb{C}^n)
    \end{equation}
    is called an exponential map with respect to the canonical inner product (\ref{eq:caninprod})
    if
    \begin{equation}
        \exp_U(0_{n,d}) = U \text{ and } \frac{d}{dt} \exp_U(t X) \rvert_{t=0} = X,
    \end{equation}
    for $U \in V_d(\mathbb{C}^n)$ and $X \in \mathcal{T}_U V_d(\mathbb{C}^n)$.
\end{definition}

\subsection{Numerical algorithms}

The exponential map for the complex Stiefel manifold can be derived explicitly.
One of the computationally viable ways to obtain it
\cite{zimmermannMatrixAlgebraicAlgorithmRiemannian2017}
is recalled in Lemma \ref{thm:exponential}.

\begin{lemma}
\label{thm:exponential}
    For a given $U \in V_d(\mathbb{C}^n)$, consider $X \in \mathcal{T}_U V_d(\mathbb{C}^n)$.
    Define $A = U^\dagger X \in \mathbb{C}^{d \times d}$ and consider the
    QR decomposition of $(\mathds{1}_n - U U^\dagger) X$, which factorizes it into a product of $Q \in V_d(\mathbb{C}^n)$ and an upper triangular matrix
    $R \in \mathbb{C}^{d \times d}$.
    % $(X = U U^\dagger X + (\mathds{1}_n - U U^\dagger) X = U A + Q R)$
    Then, the following map is an exponential map with respect to the inner product
    (\ref{eq:caninprod}).
    \begin{equation}
        \exp_U^c(X) =
        \begin{bmatrix}
            U & Q
        \end{bmatrix}
        \exp\left(
        \begin{matrix}
            A & -R^\dagger \\
            R & 0_{d,d}
        \end{matrix}
        \right)
        \begin{bmatrix}
            \one_d \\
            0_{d,d}
        \end{bmatrix},
    \end{equation}
    where $\exp$ represents the matrix exponential function.
\end{lemma}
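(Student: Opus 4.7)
The plan is to verify three things: (i) the curve $\gamma(t):=\exp_U^c(tX)$ lands in $V_d(\mathbb{C}^n)$ for every $t$; (ii) the two definitional identities $\gamma(0)=U$ and $\dot\gamma(0)=X$; and, most importantly, (iii) that $\gamma$ is the Riemannian geodesic of the canonical metric $g_U$ in (\ref{eq:caninprod}), which is what genuinely identifies $\exp_U^c$ as \emph{the} exponential map rather than an arbitrary first-order retraction.

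First I would observe that the block matrix $M := \begin{pmatrix} A & -R^\dagger \\ R & 0_{d,d} \end{pmatrix}$ is skew-Hermitian: the tangent condition $U^\dagger X + X^\dagger U = 0_{d,d}$ forces $A + A^\dagger = 0_{d,d}$, while the off-diagonal blocks are adjoint-negatives of each other by construction. Hence $\exp(tM)\in U(2d)$ for every $t$. The $n\times 2d$ matrix $[U\ Q]$ has orthonormal columns, since $U^\dagger U = Q^\dagger Q = \one_d$ and $U^\dagger Q = 0$ (the latter because $Q$ spans the range of $(\one_n-UU^\dagger)X$, which is orthogonal to the range of $U$). Consequently $\gamma(t)$ is the first $d$ columns of an $n\times 2d$ partial isometry, so $\gamma(t)\in V_d(\mathbb{C}^n)$. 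Evaluation at $t=0$ gives $\gamma(0)=U$, and differentiating together with the QR identity $QR=(\one_n-UU^\dagger)X$ yields $\dot\gamma(0) = [U\ Q]\begin{pmatrix} A \\ R \end{pmatrix} = UA+QR = UU^\dagger X + (\one_n-UU^\dagger)X = X$.

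For the geodesic property I would exploit the homogeneous-space realization $V_d(\mathbb{C}^n)\simeq U(n)/U(n-d)$. Decomposing any tangent vector as $X=UA+U_\perp B$ with $A+A^\dagger=0$ and $U_\perp$ a unitary complement of $U$, a short calculation shows $g_U(X,X)=\tfrac{1}{2}\mathrm{tr}(A^\dagger A)+\mathrm{tr}(B^\dagger B)$, which equals $\tfrac{1}{2}\mathrm{tr}(\hat M^\dagger \hat M)$ for $\hat M=\begin{pmatrix} A & -B^\dagger \\ B & 0 \end{pmatrix}$. This identifies (\ref{eq:caninprod}) with the normal metric induced by the bi-invariant trace form on $\mathfrak{u}(n)$ under the reductive splitting at $U$. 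Since geodesics of a normal homogeneous space are horizontal lifts of one-parameter subgroups, extending $[U\ Q]$ to a full unitary $\tilde U=[U\ Q\ Q_\perp]\in U(n)$ and embedding $M$ as the top-left $2d\times 2d$ block of $\tilde M\in\mathfrak{u}(n)$ with remaining blocks zero, one sees that $\tilde M$ is horizontal and that the first $d$ columns of $\tilde U\exp(t\tilde M)$ decouple from the zero blocks and reduce to $\gamma(t)$. Therefore $\gamma$ is the Riemannian geodesic, and $\exp_U^c$ is the canonical exponential map.

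The main obstacle is the metric-identification step: one must carefully check that the factor $\one_n-\tfrac12 UU^\dagger$ in (\ref{eq:caninprod}) is exactly what compensates the double-counting of the $\mathfrak{u}(d)$-block of $\mathfrak{u}(n)$ when a horizontal skew-Hermitian matrix is pushed forward to a tangent vector of $V_d(\mathbb{C}^n)$ via right-multiplication by a full unitary frame. A purely computational alternative, which bypasses the homogeneous-space machinery, is to substitute the closed form $\gamma(t)=[U\ Q]\exp(tM)\begin{pmatrix} \one_d \\ 0 \end{pmatrix}$ into the geodesic equation for the canonical Stiefel metric (the complex analogue of the equation derived in \cite{edelmanGeometryAlgorithmsOrthogonality1998}) and verify it termwise using the block-power-series expansion of $M$ together with $U^\dagger Q = 0$ and $A+A^\dagger = 0$; this is routine but lengthy, which is why the homogeneous-space route is preferable.
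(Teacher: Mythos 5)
Your proof is correct, but there is essentially nothing in the paper to compare it against: the paper does not prove Lemma~\ref{thm:exponential} at all --- it recalls the formula as a known result, citing \cite{zimmermannMatrixAlgebraicAlgorithmRiemannian2017} (the complex analogue of the geodesic formula for the canonical metric in \cite{edelmanGeometryAlgorithmsOrthogonality1998}). Two remarks on your argument. First, the paper's own Definition of ``exponential map'' only demands $\exp_U(0_{n,d})=U$ and $\frac{d}{dt}\exp_U(tX)\rvert_{t=0}=X$, a first-order retraction condition that never actually uses the metric (\ref{eq:caninprod}); under that literal definition your parts (i) and (ii) already constitute a complete proof, and both are verified correctly: $M$ is skew-Hermitian because tangency forces $A+A^\dagger=0_{d,d}$, $[U\ Q]$ has orthonormal columns, and $\dot\gamma(0)=UA+QR=\Pi_U X+(\one_n-UU^\dagger)X=X$. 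Your part (iii) establishes the stronger, standard statement --- that the curve is the Riemannian geodesic of the canonical metric --- which is what the cited reference proves and what the authors presumably intend; the homogeneous-space route (identifying $g_U$ with the normal metric induced by one half of the trace form on $\mathfrak{u}(n)$ under the reductive splitting, so that $g_U(X,X)=\tfrac12\mathrm{tr}(A^\dagger A)+\mathrm{tr}(B^\dagger B)=\tfrac12\mathrm{tr}(\hat M^\dagger\hat M)$, and then invoking that geodesics of a naturally reductive space through the base point are projections of horizontal one-parameter subgroups) is the standard argument and is sound, including the reduction of $\tilde U\exp(t\tilde M)$ to $[U\ Q]\exp(tM)$ by the block structure of $\tilde M$. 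One small caveat: your claim $U^\dagger Q=0$ rests on the columns of $Q$ spanning the range of $(\one_n-UU^\dagger)X$, which is automatic only when that matrix has full column rank $d$; in the rank-deficient case the thin QR factor must be chosen with columns in $\mathrm{range}(U)^\perp$ (always possible), since otherwise $[U\ Q]$ need not be an isometry and the curve could leave $V_d(\mathbb{C}^n)$.
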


However, numerical computation of a matrix exponential may be too slow for an iterative
optimization algorithm and, for this reason, other methods have been proposed. Roughly speaking,
these methods allow an approximate exponential to leave $V_d(\mathbb{C}^n)$, subsequently projecting  it back onto
$V_d(\mathbb{C}^n)$ using retraction map. The QR decomposition can be used to obtain
a viable retraction. The (thin, or reduced) QR decomposition of a rectangular $X \in \mathbb{C}^{n \times d}$
is $X = Q R$, where $Q \in V_d(\mathbb{C}^n)$ and $R$ is an upper triangular matrix \cite{boydIntroductionAppliedLinear2018}.
We thus need to extract the unitary factor of the QR decomposition, we define $\mathcal{Q} : \mathbb{C}^{n \times d} \rightarrow V_d(\mathbb{C}^n)$ to
be the map that returns the $Q$ factor.
At this point, we have all the ingredients necessary to define Algorithm \ref{alg:graddesc}.
In the following, $R_{U_k}(t d_k)$ refers to either the exponential map $\mathrm{exp}_U(t d_k)$
or a retraction, such as $\mathcal{Q}(U_k + t d_k)$.

\begin{algorithm}
    \caption{Gradient descent}
    \label{alg:graddesc}
    \begin{algorithmic}[1]
        \REQUIRE $U_0 \in V_d(\mathbb{C}^n)$
        \ENSURE $U_f \in V_d(\mathbb{C}^n)$
        
        \STATE $d_0 \gets \mathrm{grad} J_d(U_0)$
        \STATE $k \gets 0$
        
        \WHILE{\NOT Stopping conditions}
            \STATE $\Phi_k(t) \gets J_d(R_{U_k}(t d_k))$
            \STATE $t_k \gets \mathrm{argmax}_{t \in [0, \infty)} \Phi_k(t)$
            \STATE $U_{k+1} \gets R_{U_k}(t_k d_k)$,
            \STATE $d_{k+1} \gets \mathrm{grad} J_d(U_{k+1})$
            \STATE $k \gets k + 1$
        \ENDWHILE

        \STATE $U_f \gets U_k$
    \end{algorithmic}
\end{algorithm}

Since $\Pi_U = U U^\dagger = U R R^\dagger U^\dagger$
for any unitary $R \in \mathbb{C}^{d \times d}$, the optimal solution is not a
unique $U$, if it exists. In other words, the minima of $J_d(U)$ are degenerate,
and each local optimum $U$ belongs to a class of optima characterized by $\Pi_U$.
This degeneracy is broken once we consider the regularization introduced in (\ref{eq:regcost}).
However, also $J_d^{\mathrm{reg}}(U)$ may have multiple local optima
on $V_d(\mathbb{C}^n)$. For this reason, we run algorithm \ref{alg:graddesc}
with several initial values (multi-start optimization), and then select the best of the given results.
 More details on gradient flows can be found in \cite{helmkeOptimizationDynamicalSystems1994}.

Notice also that Algorithm \ref{alg:graddesc} involves the following optimization problem
$t_k \gets \mathrm{argmax}_{t \in [0, \infty)} \Phi_k(t)$. The variable $t_k$ is interpreted
as a step length, and the objective is to use the step length that brings the
estimate to the minimum of the curve $\Phi_k(t)$, which follows the descent direction.
This condition can be relaxed to the following,
\begin{equation}
    \label{eq:armijo}
    \Phi_k(t) \leq \Phi_k(0) + c_1 t_k g_{U_k}(\mathrm{grad} J_d(U_k), d_k),
\end{equation}
which is known as Armijo condition.
In order to find a step size that
satisfies the Armijo condition the backtracking algorithm is used.

\begin{algorithm}[!htpb]
    \caption{Backtracking}\label{alg:backtracking}
    \begin{algorithmic}
        \REQUIRE $U_k \in V_d(\mathbb{C}^n)$, $c_1$, $t_0$
        \ENSURE $t_k$

        \STATE $t \gets t_0$

        \STATE $d_k \gets -\mathrm{grad} J_d(U_k)$

        \WHILE{\NOT Armijo condition \eqref{eq:armijo}}
            \STATE $t \gets \tau t$
        \ENDWHILE

        \STATE $t_k \gets t$
    \end{algorithmic}
\end{algorithm}

Additionally, the two-point method developed in \cite{barzilaiTwoPointStepSize1988}
can be used to speed up the calculation of the step size by substituting the fixed
initial estimate of the backtracking algorithm by the one given by this method.
Given the current and previous estimates of the optimization algorithm, $U_k$,
$U_{k-1}$, the two-point method provides the following initial estimate for the
step size $t$. Let $\Delta U_k = U_k - U_{k-1}$ be the difference between the
last two iterations, and $\Delta G_k = G_k - G_{k-1}$ the difference
between their gradients, i.e. $G_k = \nabla_{U} J_d(U) \rvert_{U=U_k}$,
then the two-point step size is
\begin{equation}
    t_k = \frac{\langle \Delta U_k, \Delta G_k \rangle}{\langle \Delta G_k, \Delta G_k \rangle},
\end{equation}
where $\langle A, B \rangle = \mathrm{tr}(A^\dagger B)$.
Finally, note that due to numerical error the iteration point $U_k$
may cause the flow $\{U(t)\}_{t \geq 0}$ to exit
$V_d(\mathbb{C}^n)$. Usually the QR decomposition is employed to project
the flow back onto the Stiefel manifold, i.e.
$U_k \gets \mathcal{Q}(U_k)$. Another option is to use
$U_k \gets U_k(U_k^\dagger U_k)^{-1/2}$, if the inverse exists.

\section{Examples and applications}\label{sec:examples}

In this section we apply our method to a series of test-bed systems and noise models, and compare its
performance to the iterative SDP method described in \cite{lidarQuantumErrorCorrection2013}, as well as to relevant known quantum codes from the QEC literature.
The code and the data used in the following can be found in \cite{casanovaCodeDataFinding2024}.

\subsection{Optimal Codes for Bit-flip Noise Models}

In a system of qubits, a bit-flip noise is a quantum channel that rotates the
state of a random qubit around the $X$ axis of the Bloch sphere, with
probability $p$. This rotation is represented by the first Pauli matrix
\begin{equation}
    \sigma_x = \begin{bmatrix}
        0 & 1 \\
        1 & 0
    \end{bmatrix},
\end{equation}
which maps $\ket{0} \mapsto \ket{1}$ and $\ket{1} \mapsto \ket{0}$.
The Kraus operators for this noise acting
on a single qubit are
\begin{equation}
\label{eq:bitflip1q}
        \mathcal{N} \sim \{\sqrt{1 - p} \mathds{1}, \sqrt{p} \sigma_x\}.
\end{equation}
In this section we consider quantum codes consisting of three qubits,
since this is the smallest amount of qubits necessary to correct a single
bit-flip. However, in order to do so, we need to extend the noise model
\eqref{eq:bitflip1q} to the full dimension of the system in consideration.
We first consider a model extends \eqref{eq:bitflip1q} in a way such that
only a single bit-flip may occur at each iteration of the noise channel,
which we call independent noise model. Then we include a correlated
term that flips two qubits simultaneously. We use this as a toy model
to study the effect of such terms on the optimality of quantum codes.
Then, finally, we consider a model that includes all possible combinations
of $\one$ and $\sigma_x$ on all three qubits as a collection of independent
identically distributed binary experiments, which we call full noise model.

As benchmark, the repetition code \cite{nielsenQuantumComputationQuantum2012, lidarQuantumErrorCorrection2013}
for bit-flip errors is used in this section. The
basis for this code is given in \eqref{eq:shorcode}.

\subsubsection{Independent noise model on 3 qubits with perfect correction}

The independent bit-flip noise model for three qubits is given by
\begin{equation}
\label{eq:indybitflip}
    \mathcal{N} \sim \{\sqrt{1 - p} \mathds{1}^{\otimes 3}\} \cup
    \{\sqrt{\frac{p}{3}} \bigotimes_{j=1}^3 \sigma_x^{\delta_{jk}}\}_{k=1}^3.
\end{equation}

After several runs the Algorithm \ref{alg:graddesc}, it has been observed that the cost function $J_3(U)$,
under this noise channel, has multiple optima, which are all global. Both our method and the iterative
SDP method from \cite{lidarQuantumErrorCorrection2013} are able to find codes and recovery maps that
correct this noise with CRO fidelity equal to $1$. In fact, they always finds perfectly correctable codes
for this noise.

\subsubsection{Code dependency on dominant noise terms}

Before proceeding to the full noise model, let us consider a toy model where a  single correlated term
that flips two qubits simultaneously is introduced. As such, we modify the Kraus
operators of the noise channel as follows
\begin{equation}
\label{eq:toycorrbitflip}
    \mathcal{N} \sim \{\sqrt{1 - p} \mathds{1}^{\otimes 3}\} \cup
    \{\sqrt{\frac{p (1 - q)}{3}} \bigotimes_{j=1}^3 \sigma_x^{\delta_{jk}}\}_{k=1}^3 \cup
    \{\sqrt{p q} \mathds{1} \otimes \sigma_x \otimes \sigma_x\}.
\end{equation}
Here a correlated term $\sigma_x \otimes \sigma_x$ has been introduced, allowing for
a simultaneous bit-flip in the second and third qubits. After running
Algorithm \ref{alg:graddesc} with multiple initial points, with and without
the correlated term, we observed that the addition of this term does not alter
the structure of the degenerate global optima of the cost function, as long as the coefficient
of the correlated operation element is less than that of the uncorrelated
operation elements, i.e. $\sqrt{p (1 - q)/3} > \sqrt{p q}$ or equivalently
$q < 0.25$. If this condition is satisfied, the addition of correlation only lowers
the value of the optimum, but the optimum points do not change. However, if $q > 0.25$,
then the optimal codes are those that target the correlated noise, instead. This transition corresponds to the point in which the correlated error probability becomes dominant with respect to the single error terms. It suggests that the optimal codes are determined indeed by the dominant effects in the error model.
This effect can be seen in Figures \ref{fig:bitflip_corr} and \ref{fig:bitflip_corr_coeffs}.

\begin{figure}[!htpb]
    \centering
    \includegraphics[width=0.9\columnwidth]{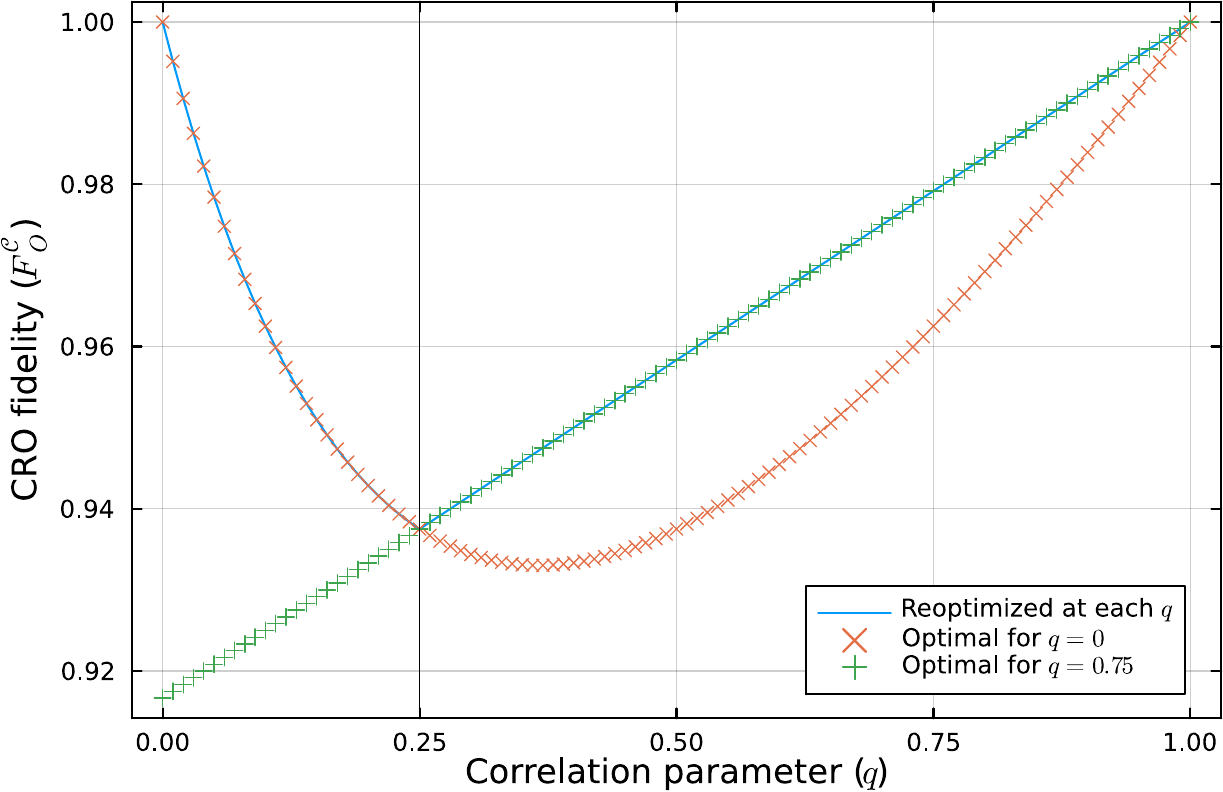}
    \caption{CRO fidelity for the correlated bit flip noise and different correlation strengths $q$. The red x marks and green crosses correspond to fixed codes optimized at $q=0$ and $q=0.75$, respectively, whereas the blue solid line correspond to codes that were reoptimized for each value of $q$. The recovery used in all three cases was the Petz recovery map, adjusted for each value of $q$.}
    \label{fig:bitflip_corr}
\end{figure}

\begin{figure}[!htpb]
    \centering
    \includegraphics[width=0.9\columnwidth]{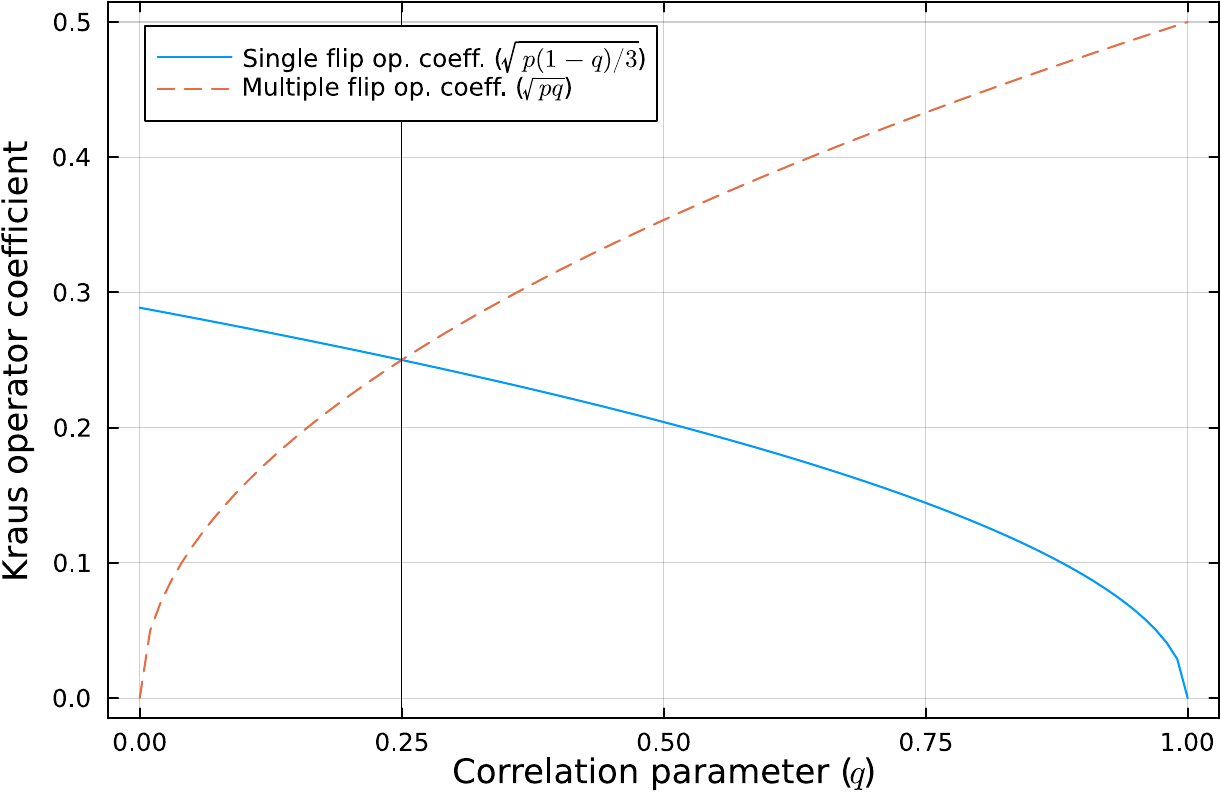}
    \caption{Coefficients of the Kraus operators corresponding to single bit-flips and a simultaneous bit-fip in the second and third qubits as described in \eqref{eq:toycorrbitflip}, for error probability $p=0.25$ and different values of the correlation parameter $q$. Notice how the crossing of these lines corresponds to the change of optimal code class observed in Figure \ref{fig:bitflip_corr}.}
    \label{fig:bitflip_corr_coeffs}
\end{figure}

\subsubsection{Full noise model on 3 qubits and approximate correction}

Finally, the full noise model, inclusive of simultaneous bit-flips on two or three different qubits, is the following
\begin{equation}
\label{eq:fullbitflip}
    \mathcal{N} \sim \{\bigotimes_{j=1}^3 (\sqrt{1-p} \one)^{1 - k_j}
    (\sqrt{p} \sigma_x)^{k_j} \ |
    \ \{k_j\}_{j=1}^3 \in \{0, 1\}\}.
\end{equation}
The optimization was performed for $p = 0.25$.
Notice that the independent terms remain dominant for $p < 0.5$. As such, following the
intuition from the toy model \eqref{eq:toycorrbitflip}, we would expect the optimized
codes to be optimal for upto $p = 0.5$. This intuition is consistent with the results
seen in Figure \ref{fig:bitflip}, as evidenced by the crossing at $p = 0.5$.
Notice also that all three codes have the exact same performance. This is in fact the case
regardless of the inclusion of the correlated terms. If these were excluded, i.e. if the code
was optimized for the noise model \eqref{eq:indybitflip} instead of \eqref{eq:fullbitflip}, the performance
would be the same.

When multiple correctable codes exist and the cost function is as regular as it is
for the bit-flip channel, regularization leads to simpler codes without losing performance.
In fact, we are able to obtain the repetition code, and equivalent codes with the same
sparsity, with $\lambda = 0.1$. Some examples of codewords are reported in  \ref{regularizationeffects}.

\begin{figure}[!htpb]
    \centering
    \includegraphics[width=0.9\columnwidth]{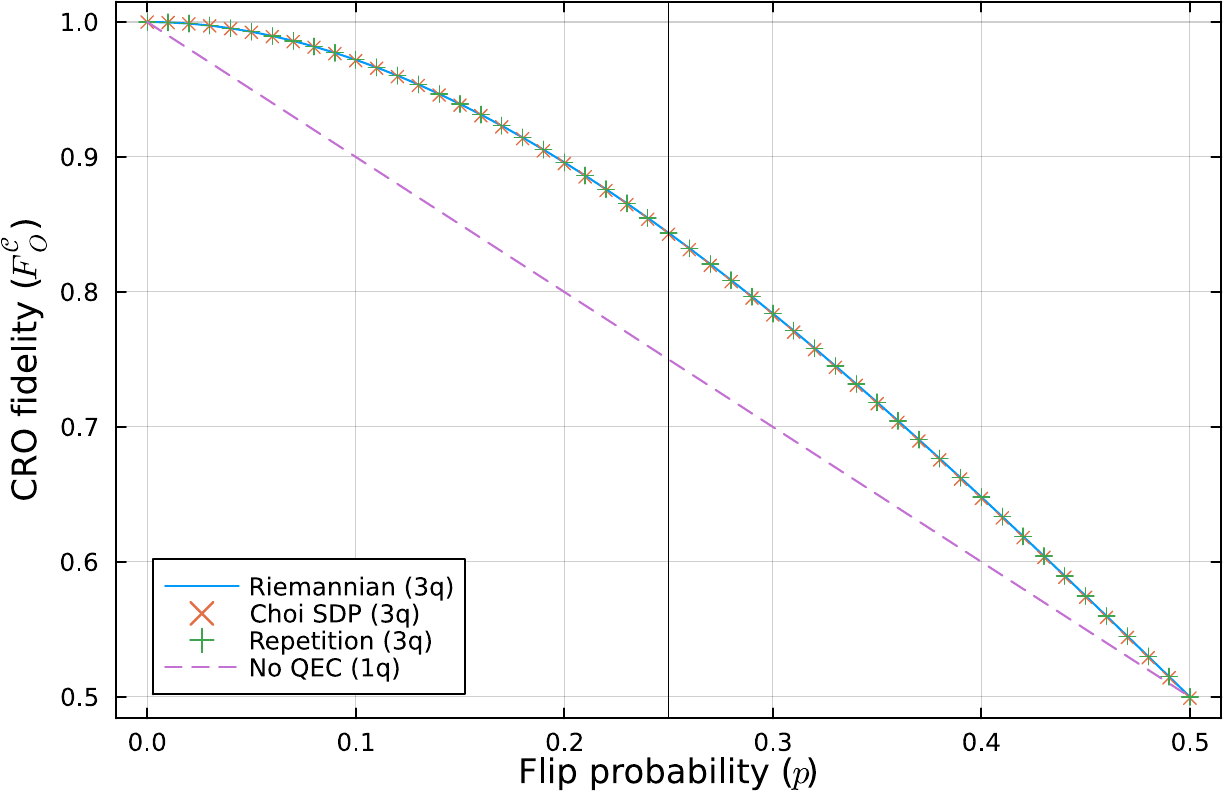}
    \caption{CRO fidelity of the corrected full bit-flip channel using Riemannian optimization (blue solid line), SDP (red x marks), the repetition code and recovery (green crosses), and with no correction (purple dashed line). The vertical line at $p = 0.25$ indicates the flip probability for which the optimization was performed. (3q) or (1q) indicates the number of qubits that compose the physical system. All of the tested QEC codes have the same performance.}
    \label{fig:bitflip}
\end{figure}

\begin{figure}[!htpb]
    \centering
    \includegraphics[width=0.9\columnwidth]{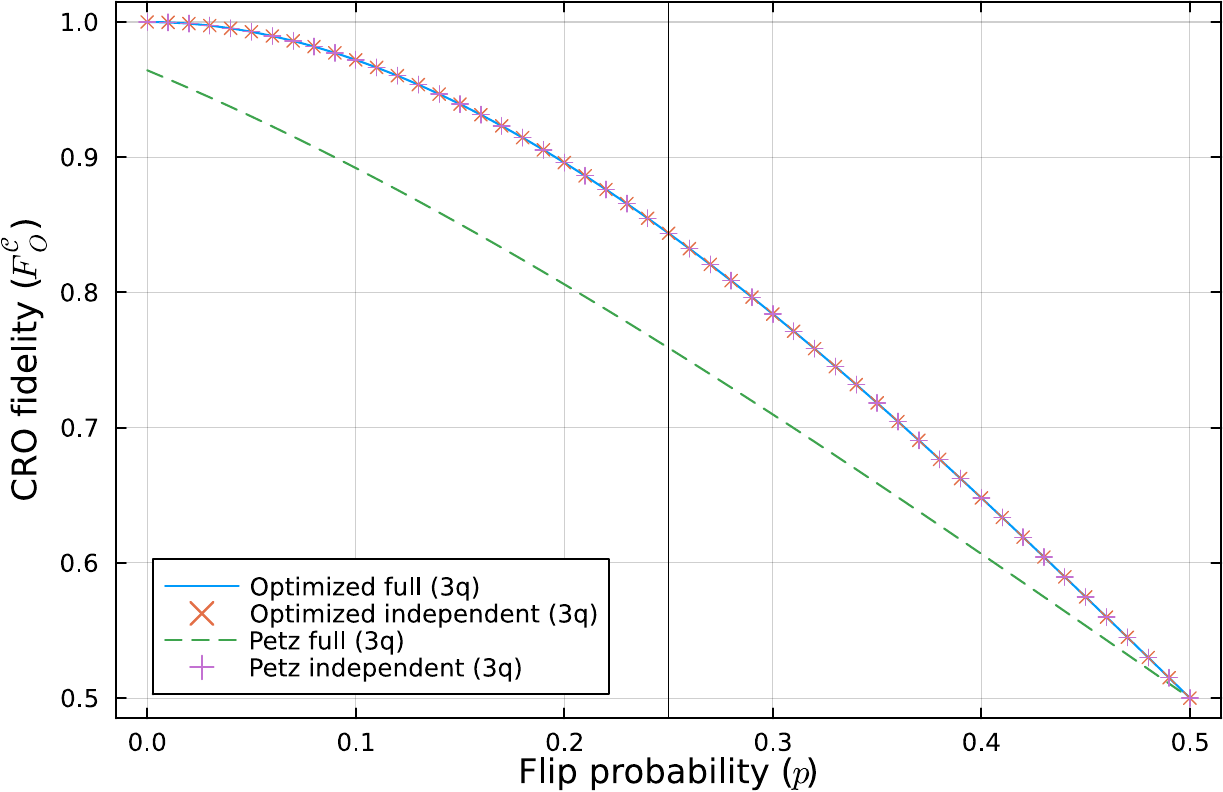}
    \caption{Comparison between the Petz recovery and optimized recoveries, tested on the full bit-flip channel with both the independent and full noise models. The Petz recovery is optimal for the independent noise model (purple crosses), as expected, since it is a perfectly correctable noise. Indeed, its performance overlaps with the optimized recovery (red x marks). For the full noise model, the Petz recovery is no longer optimal (green dashed line). However, it allows us to find the optimal subspace code, and then an optimal recovery (blue solid line) is found with the additional recovery optimization step. Notice how it overlaps with the recoveries designed for the independent noise model. This is in agreement with what was evidenced in Figure \ref{fig:bitflip_corr}. The vertical line at $p = 0.25$ indicates the flip probability for which the optimization was performed. (3q) refers to the number of qubits that compose the physical system.}
    \label{fig:bitflip_optim_recovery}
\end{figure}

\subsection{Optimal Codes for Amplitude Damping Noise}

The amplitude damping channel can be used to model the energy loss of the quantum system to its
environment, for instance an electron decaying from an excited to a ground state through spontaneous emission. Its Kraus operators are
given by the following two operators,
\begin{equation}
    E_0 = \begin{bmatrix}
        1 & 0 \\
        0 & \sqrt{1 - p}
    \end{bmatrix},
\quad
    E_1 = \begin{bmatrix}
        0 & \sqrt{p} \\
        0 & 0
    \end{bmatrix}.
\end{equation}
$E_1$ corresponds to an energy decay with probability $p$, whereas $E_0$ ensures
trace preservation.

No perfect code exist for this noise model, but an approximate four qubit code to protect a single encoded qubit has been proposed in
\cite{leungApproximateQuantumError1997}. Its explicit representation is given in equation \eqref{eq:leungcode}.
We next apply our method to the four qubit
amplitude damping channel. The Kraus
representation of the independent noise model reads
\begin{equation}
\label{eq:indyampdamp}
    \mathcal{N} \sim \{\sqrt{\frac{1}{4}} \bigotimes_{j=1}^4 E_0^{\delta_{jk}}\}_{k=1}^4 \cup
    \{\sqrt{\frac{1}{4}} \bigotimes_{j=1}^4 E_1^{\delta_{jk}}\}_{k=1}^4.
\end{equation}
Whereas the full noise model is given by
\begin{equation}
\label{eq:fullampdamp}
    \mathcal{N} \sim \{\bigotimes_{j=1}^4 E_0^{1 - k_j} E_1^{k_j} \ | \ \{k_j\}_{j=1}^4 \in \{0, 1\}^4\}.
\end{equation}

Figure \ref{fig:ampdamp} shows the performance of the codes and recoveries obtained with our method
and with the iterative SDP method proposed in \cite{lidarQuantumErrorCorrection2013}.
It also shows the difference in performance between using only the independent noise
for optimization or using the full noise model. We have also included
the approximate code from \cite{leungApproximateQuantumError1997}, and a single qubit without QEC.
In this case we do see a difference from optimizing with the full noise model or only with the independent model.
In fact, our method seems to be more robust to this change of noise model, giving a significantly better result
than the SDP method when the optimization is performed independent noise model, but used to correct the full noise.
It is also important to mention that our method was orders of magnitude faster (minutes vs hours), mainly due to the
fact that we optimize the code and recovery only once and that our optimization variables
of lower dimensionality. In addition, in optimizing the recovery operation
we can limit the Kraus rank of the recovery map, which we set to the rank of the noise model. On the other hand,
in the SDP method, one typically obtains full rank quantum channels both for encoding and recovery.

In approximate cases, regularization, albeit still useful
for simplifying the code, can be detrimental to fidelity optimization. A true
trade-off appears between the sparsity of the code and the fidelity, mediated by the
regularization hyperparameter $\lambda$.
Table \ref{tab:regularization} shows this effect. Notice the decrease
of fidelity for $\lambda = 0.005$. In this example we were able to reach the same level of
sparsity with $\lambda = 0.001$ and without any loss of fidelity with respect to the case
without regularization. The regularization parameter needs to be tuned for each noise model.

\begin{table}[!htpb]
\centering
\begin{tabular}{|c|c|c|c|}
\hline
$\lambda$    & 0.0    & 0.001  & 0.005  \\
\hline
$\ket{0000}$ & 0.6989 | 0.3466 & 0.6143 | 0.4811 & 0.5472 | 0.0   \\
$\ket{0001}$ & 0.0008 | 0.0006 & 0.0    | 0.0    & 0.0    | 0.0   \\
$\ket{0010}$ & 0.0005 | 0.0003 & 0.0    | 0.0    & 0.0    | 0.471 \\
$\ket{0011}$ & 0.2221 | 0.448  & 0.3083 | 0.3936 & 0.0    | 0.171 \\
$\ket{0100}$ & 0.0005 | 0.0003 & 0.0    | 0.0    & 0.0    | 0.502 \\
$\ket{0101}$ & 0.0    | 0.0    & 0.3083 | 0.3937 & 0.0    | 0.144 \\
$\ket{0110}$ & 0.2221 | 0.448  & 0.0    | 0.0    & 0.0    | 0.0   \\
$\ket{0111}$ & 0.0003 | 0.0002 & 0.0    | 0.0    & 0.4612 | 0.0   \\
$\ket{1000}$ & 0.0006 | 0.0008 & 0.0    | 0.0    & 0.0    | 0.515 \\
$\ket{1001}$ & 0.2221 | 0.448  & 0.0    | 0.0    & 0.0    | 0.0   \\
$\ket{1010}$ & 0.0    | 0.0    & 0.3083 | 0.3937 & 0.0994 | 0.0   \\
$\ket{1011}$ & 0.0005 | 0.0005 & 0.0    | 0.0    & 0.49   | 0.0   \\
$\ket{1100}$ & 0.2221 | 0.448  & 0.3083 | 0.3936 & 0.132  | 0.0   \\
$\ket{1101}$ & 0.0006 | 0.0006 & 0.0    | 0.0    & 0.4695 | 0.0   \\
$\ket{1110}$ & 0.0006 | 0.0003 & 0.0    | 0.0    & 0.0    | 0.0   \\
$\ket{1111}$ & 0.5605 | 0.2779 & 0.4924 | 0.3856 & 0.0    | 0.458 \\
\hline
Fid.         & 0.9034 & 0.9034 & 0.8886 \\
\hline
\end{tabular}
\caption{Absolute value of the coefficients of the base elements of codes optimized with different regularization parameters $\lambda$. In each column, the number to the left corresponds to the logical state $\ket{0_L}$, whereas the number to the right corresponds to the logical state $\ket{1_L}$.}
\label{tab:regularization}
\end{table}

\begin{figure}[!htpb]
    \centering
    \includegraphics[width=0.9\columnwidth]{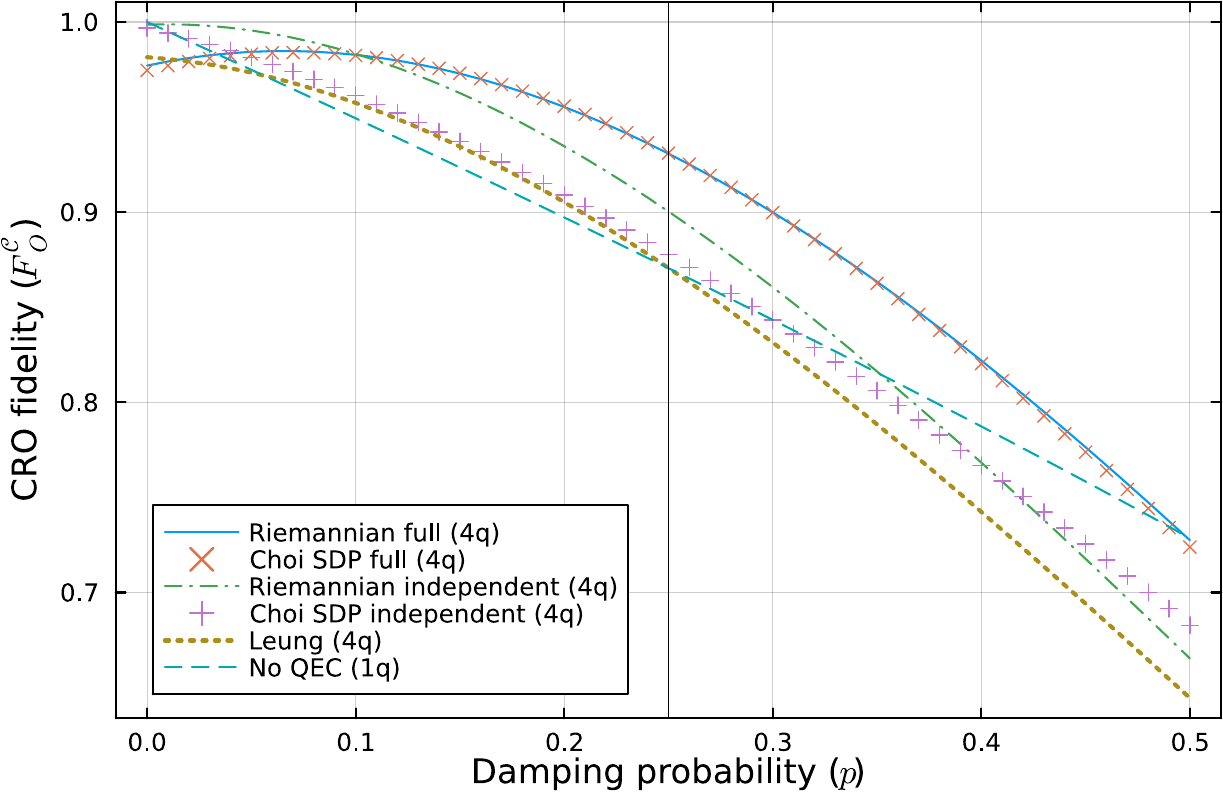}
    \caption{CRO fidelity of the corrected amplitude damping channel using Riemannian optimization (solid blue line for the full noise model and dot-dashed green line for the independent model), SDP (red x marks for the full noise model and purple crosses for the independent model), Leung's code and recovery (brown dotted line), and with no correction (cyan dashed line). The number between parentheses indicates the number of qubits that compose the physical system. The dashed line corresponds to the case of a single qubit without QEC. The vertical line at $p = 0.25$ indicates the damping probability for which the optimization was performed. The number between parentheses indicates the number of qubits that compose the physical system.}
    \label{fig:ampdamp}
\end{figure}

\begin{figure}[!htpb]
    \centering
    \includegraphics[width=0.9\columnwidth]{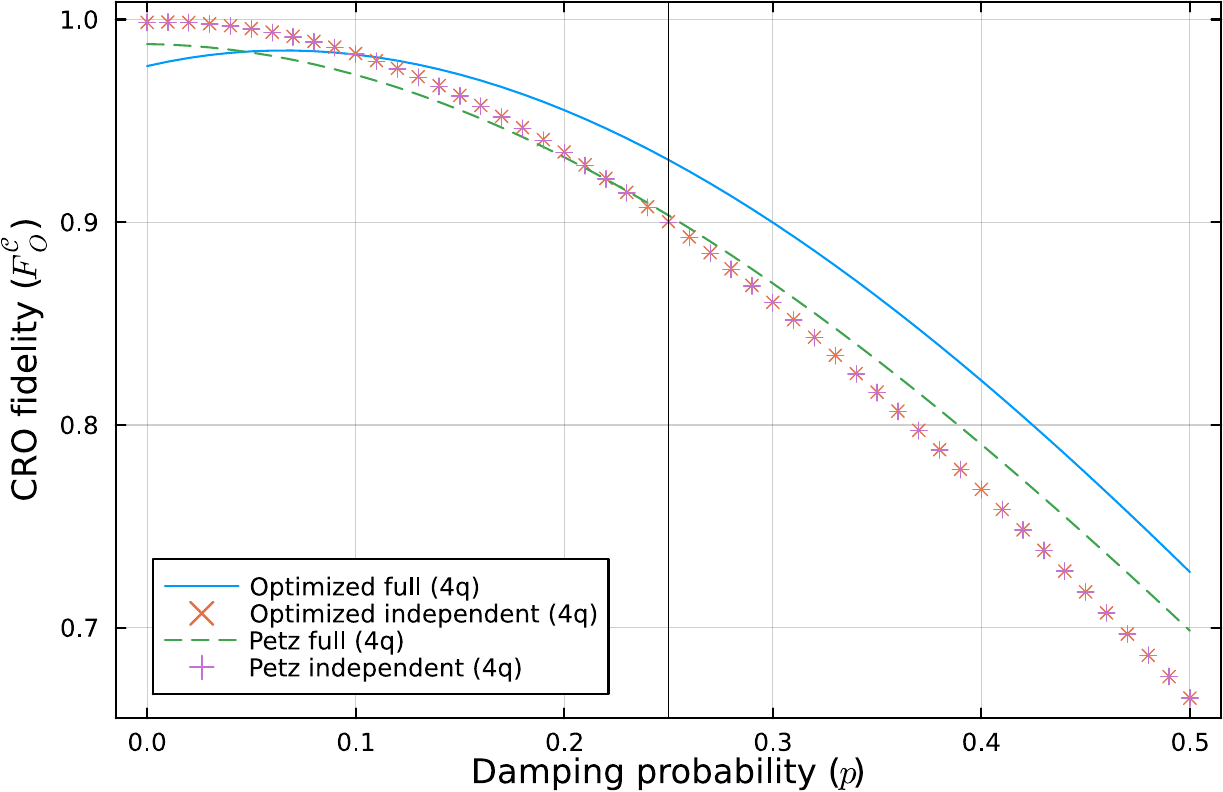}
    \caption{Comparison between the Petz recovery and optimized recoveries, tested on the full noise model but optimized with both the independent and the full models. As for the bit-flip noise, the Petz recovery is optimal for the independent noise model (purple crosses), notice how it overlaps with the optimized recovery (red x marks), but not for the full noise model (green dashed line), notice how it runs bellow the optimized recovery (blue solid line). Nevertheless, our approach still allows us to find an optimal subspace code without recurring to iterative optimization methods. The vertical line at $p = 0.25$ indicates the damping probability for which the optimization was performed. (4q) indicates the number of qubits that compose the physical system.}
    \label{fig:ampdamp_optim_petz}
\end{figure}

\subsection{Depolarizing noise on 5 qubits}

The depolarizing channel models isotropic loss of information in a qubit.
It can be obtained considering probabilistic flips around all three axes of the Bloch sphere, i.e.
bit-flip, phase-flip and bit-phase-flip. As mentioned before, the bit-flip is
represented by the first Pauli matrix $\sigma_x$. The other
two Pauli matrices,
\begin{equation}
 \sigma_z = \begin{bmatrix}
        1 & 0 \\
        0 & -1
    \end{bmatrix},    
\quad
   \sigma_y = \begin{bmatrix}
        0 & -\iu \\
        \iu & 0
    \end{bmatrix},
\end{equation}
induce a phase-flip and a bit-phase-flip, respectively. 

For a single qubit this channel has the following Kraus representation:
\begin{equation}
    \mathcal{N} \sim \{\sqrt{1 - p} \mathds{1},
    \sqrt{\frac{p}{3}} \sigma_x,
    \ \sqrt{\frac{p}{3}} \sigma_y,
    \ \sqrt{\frac{p}{3}} \sigma_z\}.
\end{equation}

The depolarizing channel represents general single qubit noise.
This is a consequence of the fact that if a code and recovery perfectly correct a noise with  Kraus operators $\{N_j\}$, then
they also correct any noise with Kraus operators
$\{\sum_k \lambda_{jk} N_k\}$. Since the Pauli matrices together with the identity
form a bases of $\mathbb{C}^{2 \times 2}$, a code able to correct
the depolarizing channel is also able to correct any other single qubit noise.

Different perfect five qubit code are known for this noise,
\cite{bennettMixedstateEntanglementQuantum1996, laflammePerfectQuantumError1996}.
These do not span the same subspace but are related to one another by local (single qubit) transformations.
The associated logical bases for these codes are reported in the Appendix, in equations \eqref{eq:perfectcodeBennett} and \eqref{eq:perfectcodeLaflamme},
respectively. 

For a system of five qubits, the following Kraus representation of the independent model is
\begin{equation}
\label{eq:indydepo}
    \begin{split}
        \mathcal{N} \sim \{\sqrt{1 - p} \mathds{1}^{\otimes 5}\} &\cup
        \{\sqrt{\frac{p}{15}} \bigotimes_{j=1}^5 \sigma_x^{\delta_{jk}}\}_{k=1}^5 \\ &\cup
        \{\sqrt{\frac{p}{15}} \bigotimes_{j=1}^5 \sigma_y^{\delta_{jk}}\}_{k=1}^5 \cup
        \{\sqrt{\frac{p}{15}} \bigotimes_{j=1}^5 \sigma_z^{\delta_{jk}}\}_{k=1}^5.
    \end{split}
\end{equation}
Whereas the full model is given by
\begin{equation}
\label{eq:fulldepo}
    \begin{split}
        \mathcal{N} \sim \{\bigotimes_{j=1}^5& (\sqrt{1-p} \one)^{\delta_{0,k_j}}
        (\sqrt{\frac{p}{3}} \sigma_x)^{\delta_{1,k_j}} \\
        & (\sqrt{\frac{p}{3}} \sigma_y)^{\delta_{2,k_j}}
        (\sqrt{\frac{p}{3}} \sigma_z)^{\delta_{3,k_j}} \ |
        \ \{k_j\}_{j=1}^5 \in \{0, 1, 2, 3\}^5\}.
    \end{split}
\end{equation}

Figure \ref{fig:depo} shows the performance of different codes and recoveries optimized for the
independent noise model \eqref{eq:indydepo} and then tested with the full noise model \eqref{eq:fulldepo}.
For comparison Laflamme's code and recovery \cite{laflammePerfectQuantumError1996} is included as well.

\begin{figure}[!htpb]
    \centering
    \includegraphics[width=0.9\columnwidth]{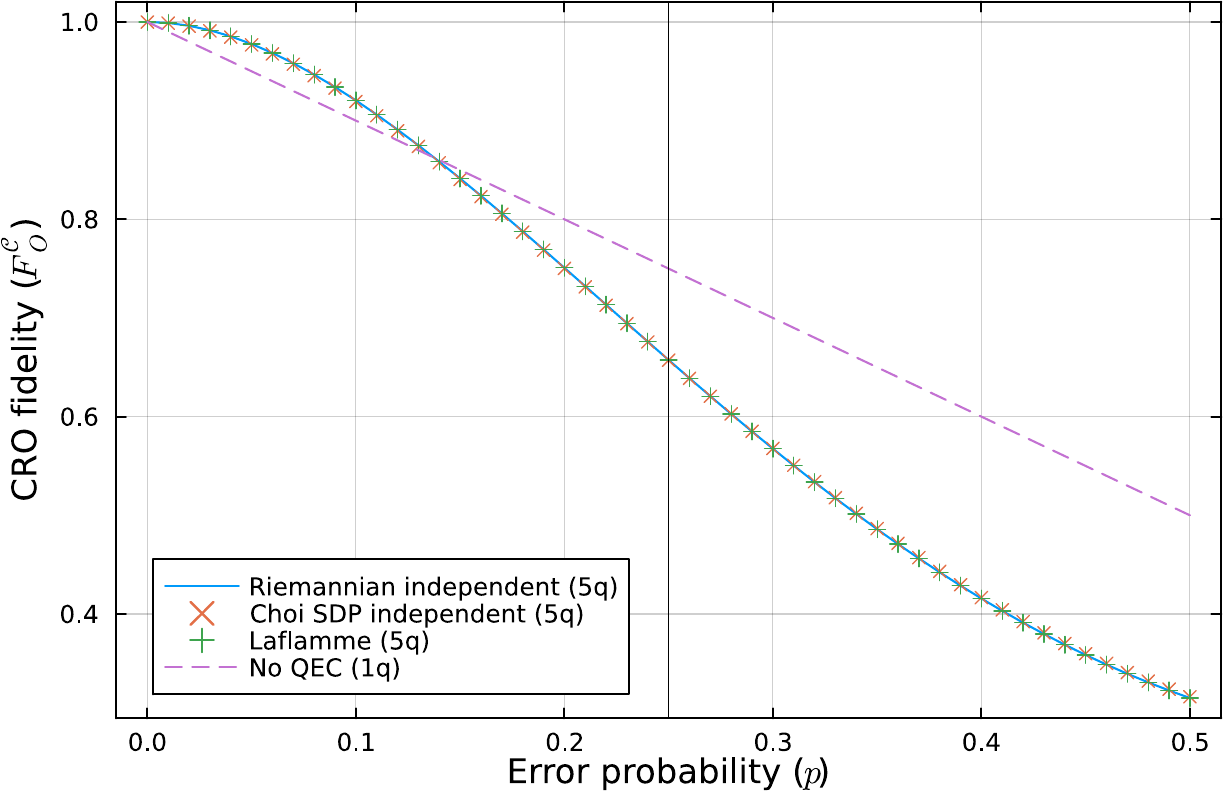}
    \caption{CRO fidelity of the corrected depolarizing channel using Riemannian optimization (blue solid line), SDP (red x marks), Laflamme's code and recovery (green crosses), and with no correction (purple dashed line). The vertical line at $p = 0.25$ indicates the error probability for which the optimization was performed. The number between parentheses indicates the number of qubits that compose the physical system. The codes were optimized for the independent noise model, but the fidelity shown was computed with the full noise model.}
    \label{fig:depo}
\end{figure}

\subsection{Non-Markovian pure dephasing}

A molecular nanomagnet in a spin bath has been
simulated by means of cluster-correlation expansion (CCE)
\cite{petiziolCounteractingDephasingMolecular2021, yangQuantumManybodyTheory2008, yangQuantumManybodyTheory2009},
then, the Kraus operators associated to the
dynamical map generated by the interaction with the bath have been
computed. This noise is of the pure dephasing kind, acting on a qudit.
Our optimization algorithm has been used to obtain the optimal quantum
codes.

Seeking an application to a noise model emerging from a physical Hamiltonian rather than a phenomenological description,
we considered a molecular nanomagnet $S$ of spin $7/2$ in a bath $B$ of $1/2$ spins.
This is represented by a Hilbert space $\mathcal{H}_S \otimes \mathcal{H}_B$ of dimension $8 \times 2^N$,
where $N$ is the amount of particles in the bath, which we set to $50$.
The effective
Hamiltonian for such a system has been taken from \cite{petiziolCounteractingDephasingMolecular2021}. First, CCE
\cite{yangQuantumManybodyTheory2008, yangQuantumManybodyTheory2009}
has been used to simulate the dephasing dynamics of the system. Then,
these simulations have been used to compute the Choi matrix, from which
the Kraus operators of the pure dephasing noise were computed.

The effective Hamiltonian in question is the following,
\begin{equation}
    H = \mathds{1}_S \otimes H_B^{(0)} + S_z \otimes H_B^{(1)} + (\hat{S}^2 - S_z^2) \otimes H_B^{(2)},
\end{equation}
where
\begin{equation}
    H_B^{(0)} = \sum_{n=1}^N (a_n^{(k)} I_n^z + b_n^{(k)} (I_n^z)^2) +
    \sum_{\substack{n,m=1 \\ m \neq n}}^N (c_{n,m}^{(k)} I_n^+ I_m^- + d_{n,m}^{(k)} I_n^z I_m^z),
\end{equation}
and $\{S^x, S^y, S^z\}$ are the spin operators of $S$,
$\hat{S}^2 = S_x^2 + S_y^2 + S_z^2$ is the effective spin operator squared and
$\{I_n^x, I_n^y, I_n^z\}$ are the spin operators of the $n$-th spin of $B$.

Let
$H_{B,m} = (\ketbra{m}{m} \otimes \mathds{1}_B) H (\ketbra{m}{m} \otimes \mathds{1}_B)$,
be the projection of the effective Hamiltonian onto each of the
eigenspaces of $S^z$, where $S^z \ket{m} = m \ket{m}$, and
$m \in \{-7/2, -5/2, -1/2, 1/2, 3/2, 7/2\}$.
It is easy to see that the effective Hamiltonian is block-diagonal and that
$H = \sum_{m = 0}^7 H_{B,m}$. Therefore, the bath state $\ket{\mathcal{J}}$ evolves
conditioned on $\ket{m}$  as
$\ket{\mathcal{J}^m(t)} = e^{-\iu H_{B,m} t} \ket{\mathcal{J}}$. Then, the joint
state of $S$ and $B$ evolves to the entangled state
$\ket{\psi(t)} = \sum_m c_m \ket{m} \otimes \ket{\mathcal{J}^m(t)}$.
Computing the reduced state $\rho_s(t) = \mathrm{tr}_B(\ketbra{\psi(t)}{\psi(t)})$,
we obtain that $\rho_s(t)$ has entries
$\rho_{m,n}(t) = \rho_{m,n}(0) \braket{\mathcal{J}^n(t)}{\mathcal{J}^m(t)}$.
Therefore, the dynamical map $\rho_s(0) \mapsto \rho_s(t)$ can be characterized by the set of
functions $L_{m,n}(t) = \braket{\mathcal{J}^n(t)}{\mathcal{J}^m(t)}$, where
$m,n \in \{-7/2, -5/2, -1/2, 1/2, 3/2, 7/2\}$, which can be approximated by CCE. Finally,
given $t$, the Choi matrix of this dynamical map can be computed and then used to obtain the
corresponding Kraus operators.

However, the resulting Kraus map cannot be iterated, because the map
$\rho_s(t_0 + k \Delta t) \mapsto \rho_s(t_0 + (k+1) \Delta t)$ is not Markovian. 
This means that
the error correcting procedure can only reliably be applied once using this model.
Keeping this in mind, Table \ref{tab:dephasing} shows the performance of the one-shot recoverability of the information, yielding excellent performance, and improving on randomized searches for codes.

\begin{table}[!htpb]
\centering
\begin{tabular}{|c|c|c|c|c|}
\hline
          & Optimal   & Repetition      & Rand. mean    & Rand. best   \\
\hline
%    Cost  & 3.9808    & 2.0024    & 3.5922        & 3.8976       \\
    Fid.  & 0.9952    & 0.5006    & 0.8981        & 0.9744       \\
\hline
\end{tabular}
\caption{Performance of different codes under the pure dephasing noise.}
\label{tab:dephasing}
\end{table}

In order to study the performance degradation of further applications of the QEC procedure due to the non-Markovianity
of the noise, we consider a toy model with only two bath spins. This allows us to simulate the exact state evolution of
the system, including the bath. This, in turn, allows us apply the recovery map periodically throughout the system's
evolution. Figure \ref{fig:markov} shows the CRO fidelity after each application of the recovery map.

\begin{figure}[!htpb]
    \centering
    \includegraphics[width=0.9\columnwidth]{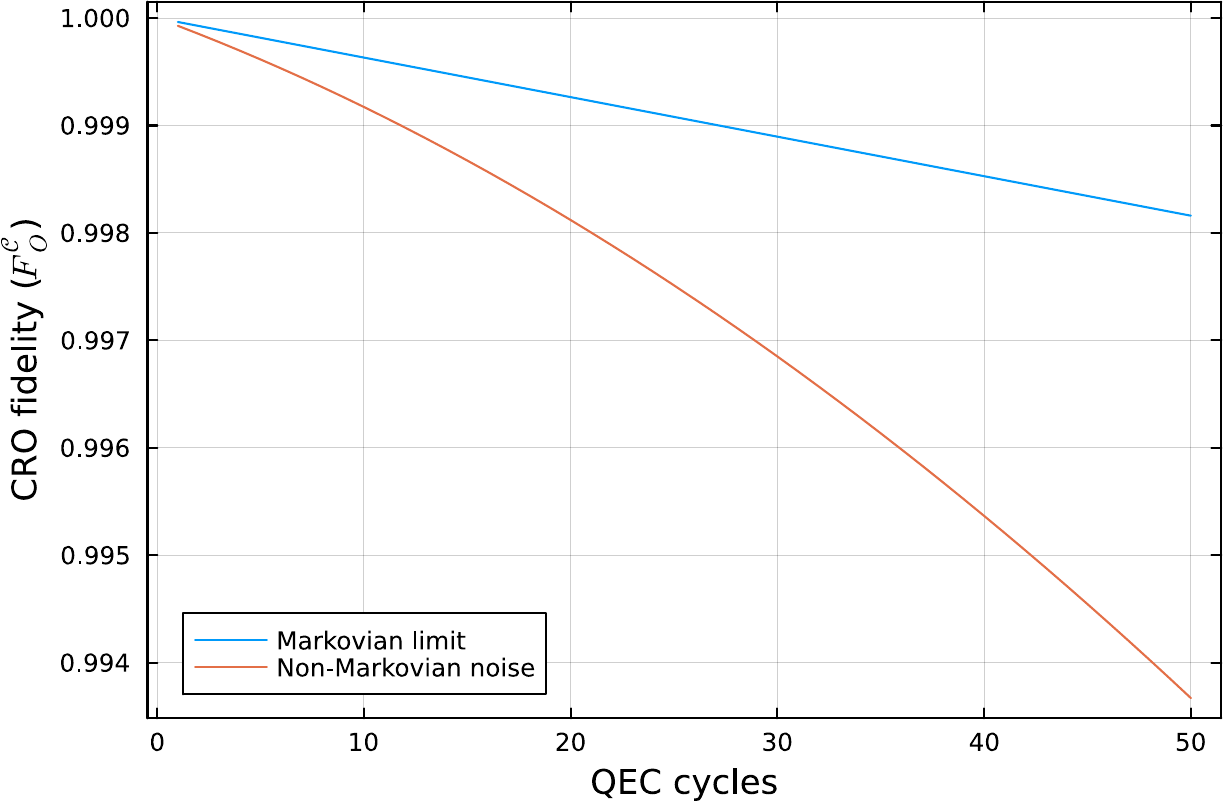}
    \caption{Performance degradation due to the non-Markovianity of the noise. The blue line shows the performance if the noise was Markovian and the red line shows the performance with the actual non-Markovian pure dephasing noise. In both cases the same subspace code is used, in combination with the same Petz recovery.}
    \label{fig:markov}
\end{figure}

\subsection{IBM qubits}

We next consider the noise acting on the first three qubits of IBM's \texttt{ibm\_kyoto} quantum computer.
Quantum process tomography has been performed on the first three
qubits of one of IBM's Eagle r3 quantum processors, in order to characterize
the noise that affects these qubits, and obtain the corresponding Kraus
operators. Next, our algorithm has been applied to the reconstructed noise model in order
to obtain an optimal quantum code to represent and correct a single qubit using these three as a physical support.

In order to obtain a model of the noise acting on the first three qubits of IBM Kyoto quantum computer,  quantum process tomography circuits were implemented using Qiskit \cite{Qiskit}.
The implementation was based on the tomography module of the Qiskit Experiments extension.
However, this extension was not used, since we needed greater control over the job submission
process. The input states used were
$\ket{\psi_0} = \ket{\psi_0^1} \otimes \ket{\psi_0^2} \otimes \ket{\psi_0^3}$, where
$\ket{\psi_0^k} \in \{\ket{0}, \ket{1}, \ket{+}, \ket{\iu}\}$, and the measurement
bases were $\sigma^1 \otimes \sigma^2 \otimes \sigma^3$, where
$\sigma^k \in \{\sigma_x, \sigma_y, \sigma_z\}$. Then experiment consisted in
preparing each of the input states, letting them freely evolve for 25 $\mu$s, and measuring
in each of the bases.
The time of 25 $\mu$s is close to a fourth of the median T2 time of the \texttt{ibm\_kyoto} processor, which is
110.48 $\mu$s, and to an eighth of its median T1 time, which is 212.5 $\mu$s.
Then, in order to fit a model to the
experiment's results, i.e. to find the Kraus operators of the noise,
the method developed in \cite{ahmedGradientDescentQuantumProcess2023}
was used for two main reasons. First, it allows us to reduce the amount
of parameters to be estimated by introducing an estimate of the Kraus rank of the
noise, and second, there is no need to project the estimate of the map afterward,
onto the space of CPTP maps.

The Kraus rank was estimated by looking at the convergence of the average square error
for each truncation of the map. Figure \ref{fig:krausrank} shows this convergence.
From this plot, the Kraus rank is estimated to be 12.

\begin{figure}[!htpb]
    \centering
    \includegraphics[width=0.9\columnwidth]{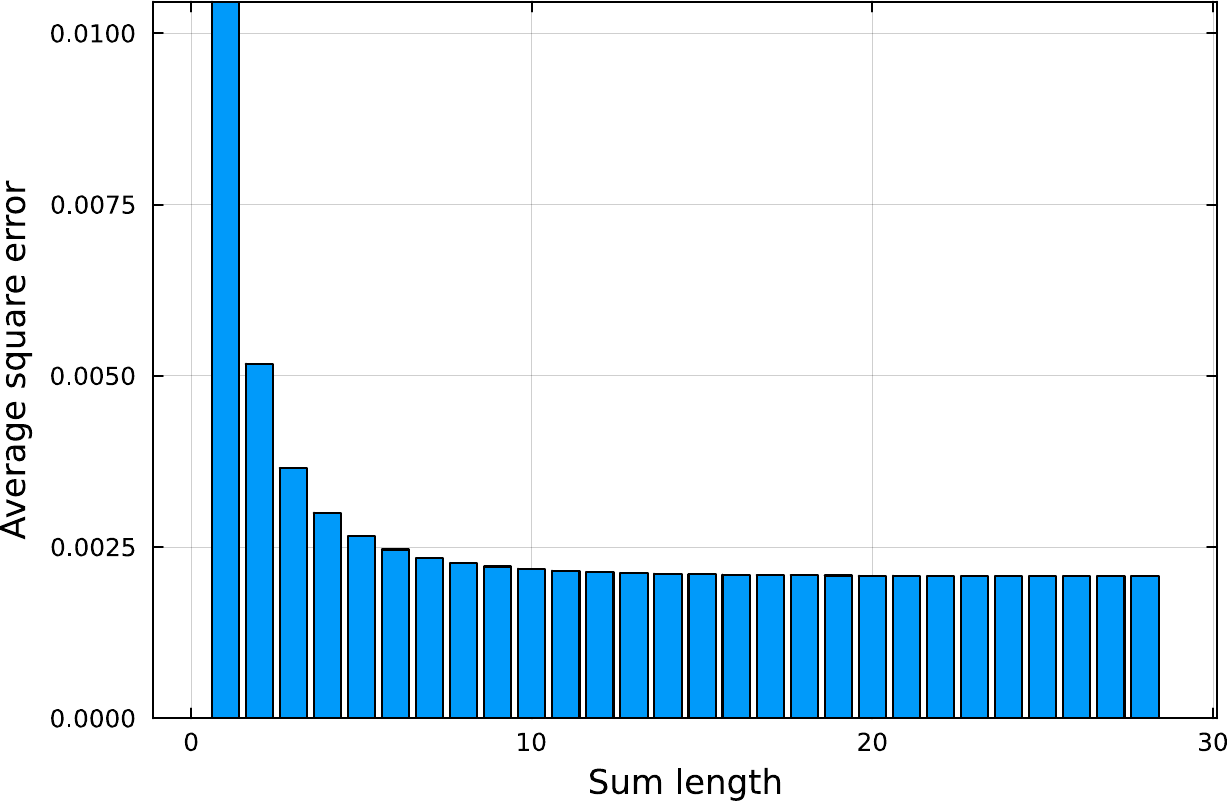}
    \caption{Average square error for the estimate of the noise of the first three qubits of IBM's \texttt{ibm\_kyoto} quantum computer in function of the length of the Kraus operator sum.}
    \label{fig:krausrank}
\end{figure}

Given the noise model estimate, our method has been used to find the best protected implementation of a single logical qubit in these three physical ones. The results are  reported in Table \ref{tab:ibm}, and show that the optimized code performance improve on either standard or randomized codes. The limited effectiveness of the error correction in this setting points to a noise process lacking intrinsic symmetries and a sampling time (25 $\mu$s) that is probably too long for the this system.

\begin{table}[!htpb]
\centering
\begin{tabular}{|c|c|c|c|c|}
\hline
          & Optimal   & Repetition      & Rand. mean    & Rand. best   \\
\hline
    Fid.  & 0.6053    & 0.5855    & 0.4697        & 0.5367       \\
\hline
\end{tabular}
\caption{Performance of different codes under IBM's qubits' noise.}
\label{tab:ibm}
\end{table}

\section{Conclusion}

A tune-able algorithmic solution to the problem of finding optimal subspace QEC
codes based on Riemannian optimization and QEC-theoretic consideration has been
proposed, and its effectiveness tested in some paradigmatic examples. The method has a
computational burden that is significantly lighter compared to existing iterative
SDP-based methods, guarantees the subspace structure of the code, and introduces for the
first time the possibility of selecting codes with simpler descriptions via regularization.

As illustrated in the numerical examples, our optimization scheme allows us to
find perfectly correctable codes when they exist, as in the case of the three
qubit bit-flip noise and the five qubit code for general uncorrelated errors.
When the noise is not correctable, our scheme allows us to find codes that
match or surpass the performance of known codes from the literature. In the
case of the four qubit amplitude damping noise, we are able to match the
performance of Leung's approximate code. Whereas in the more realistic cases
that we studied, i.e. the simulated non-markovian pure dephasing noise induced
by a spin bath, and the real noise in IBM's \texttt{ibm\_kyoto} quantum
computer, we were able to find codes that performed better than existing codes.
These benchmark examples prove the potential of our approach in finding QEC
codes with competitive performance in new scenarios, with noisy quantum
evolutions that have not yet been tamed by suitable QEC schemes.

Since it leverages the properties of the Stiefel manifold, our method has the
advantage that it does not require the implementation of barrier functions or
final projection steps, which in general degrade the code performance,
automatically returning  a valid quantum code and correction map at each run of
the algorithm. 

Of course, the natural development of this research will involve testing and
fine-tuning the algorithm in new real-world experimental settings, with noise
models coming from tomographic data (as in the IBM processor above) rather than
phenomenological or theoretical derivations. Extensions or refinements of the
computational steps are also possible. For instance, one could consider the
implementation of conjugate descent methods
\cite{satoRiemannianOptimizationIts2021, zhuRiemannianConjugateGradient2017}.

Given that we need to store the quantum code in a matrix,
the dimension of the physical system for which we can optimize codes is chiefly limited
by the amount of memory accessible to the computer. It is the common pitfall of any
method that needs to represent quantum states or quantum channels in terms of classical
information, as the dimension of the system grows exponentially with the amount of
qubits. On the other hand, the dimension of the code itself does not have such a great
impact in the feasibility of the algorithm. Indeed, the main constraint would be
the representation of the noise map, which is independent of the dimension of the code.

However, one would still want to be able to tackle large physical systems. If a system
is known to only have nearest neighbor interactions, one approach could be to design
local codes that protect against the local noise at different sections of the quantum
device. Alternatively, our method could be used to optimize the building blocks of a concatenated
quantum code. Tensor networks are a promising tool for describing many body quantum systems,
and they have already been used to generalize the concept of code concatenation to that
of ``quantum Lego'' \cite{caoQuantumLegoBuilding2022}. A future line of research would be to apply our
algorithm to the optimization of the building blocks of scalable codes.

Lastly, the ability to rapidly test different scenarios numerically also makes
the proposed algorithm a useful tool to probe new ideas of theoretical
interest. For example, the numerical investigations carried out for this work
point to two interesting observations: (i) {\em subspace codes are
(near-)optimal also in approximate error correction}: In our approach we
optimize over subspace codes, a choice that might be sub-optimal when only
approximately correctable codes are available.  A posteriori, we find that they
perform as good as the more general, even non-isometric codes coming from the
encoding maps obtained from the existing Semi-Definite Programming (SDP) based
methods \cite{lidarQuantumErrorCorrection2013}. It is worth noting that the
latter are typically subject to numerical errors and need to be normalized in
order to obtain physically-acceptable maps, leading to further errors; (ii)
{\em The optimization of the code is independent to that of the recovery map}:
While further iterations of code/recovery using our methods are possible, and
have been implemented, numerical results in all the considered examples show
that the performance of a single optimization cycle is not significantly
improved by repeated optimization.

New in-depth analyses are needed in order to assess whether such observation
are valid in general and can lead to new theoretical results in approximate QEC
and improve its applicability to realistic noise models.

\section*{Acknowledgments}

F. Ticozzi and M. Casanova were supported by the European Union through
NextGenerationEU, within the National Center for HPC, Big Data and
Quantum Computing under Projects CN00000013, CN 1, and Spoke
10. K. Ohki was supported by JSPS KAKENHI Grant Number JP23K26126.

\appendix

\section{Gradient Computations}
\label{chp:gradient}

\subsection{Riemannian Gradient}
In order to use gradient methods to obtain a local optima of our optimization
problem, we need to compute the gradient of the cost function.
First, let us define the gradient of $J_d(U)$ over $V_d(\mathbb{C}^n)$.
Since we consider a complex function $J_d : \mathbb{C}^{n \times d} \supset V_d(\mathbb{C}^n) \rightarrow [0, \infty)$,
the Euclidean gradient of $J_d$ becomes
\begin{equation}
    \nabla_U J_d(U) = \frac{\partial}{\partial U_{\rm Re}} J_d(U) + \iu \frac{\partial}{\partial U_{\rm Im}} J_d(U) \in \mathbb{C}^{n \times d},
\end{equation}
where $U_{\rm Re} = (U+\overline{U})/2$ and $U_{\rm Im} = (U -\overline{ U})/(2i)$ are the
real and imaginary matrices of $U$, respectively. The gradient at $U \in V_d(\mathbb{C}^n)$
is calculated as
\begin{equation}
    \mathrm{grad} J_d(U) = \nabla_U J_d(U) - U(\nabla_U J_d(U))^\dagger U.
    \label{eq:manigrad}
\end{equation}

It is easy to verify that $\mathrm{grad} J_d(U) \in \mathcal{T}_U V_d(\mathbb{C}^n)$.

\begin{remark}
    For complex cost functions, the complex derivative does not
    make sense in general. For example, let us consider $J(z) = \lvert z - \alpha \rvert^2$
    where $\alpha \in \mathbb{C}$. In this case, the complex derivative does not exist.
    However, for $z = x + \iu y$, the derivative over the real and imaginary axes exist
    and are calculated as
    \[
        \frac{\partial J(z)}{\partial x} = 2 (x - \mathrm{\rm Re}(\alpha)),
        \ \frac{\partial J(z)}{\partial y} = 2 (y - \mathrm{\rm Re}(\alpha)).
    \]
    Then,
    \[
        \nabla J(z) = 2 (z - \alpha),
    \]
    and this is the natural gradient for the cost function.
\end{remark}

Since $\mathrm{tr}(A \otimes B) = \mathrm{tr}(A) \mathrm{tr}(B)$ and $\mathrm{tr}(A) = \mathrm{tr}(A^\top)$ for any square
complex matrices $A$ and $B$, the cost function (\ref{eq:cost})
\begin{equation}
    \begin{split}
        J_d(U) &= \sum_{k,l=1}^m \mathrm{tr}\left((\Pi_U^\top \otimes \Pi_U)
        N_k^\top \mathcal{N}(\Pi_U)^{-\top/2} \right.
        \left. \overline{N_l}
        \otimes N_l^\dagger \mathcal{N}(\Pi_U)^{-1/2} N_k\right) \\
        &= \sum_{k,l=1}^m \mathrm{tr}\left(\Pi_U^\top N_k^\top \mathcal{N}(\Pi_U)^{-\top/2} \overline{N_l}\right)
        \mathrm{tr}\left(\Pi_U N_l^\dagger \mathcal{N}(\Pi_U)^{-1/2} N_k\right) \\
        &= \sum_{k,l=1}^m \mathrm{tr}\left(N_l^\dagger \mathcal{N}(\Pi_U)^{-1/2} N_k \Pi_U\right)
        \mathrm{tr}\left(\Pi_U N_l^\dagger \mathcal{N}(\Pi_U)^{-1/2} N_k\right) \\
        &= \sum_{k,l=1}^m \mathrm{tr}\left(\Pi_U N_l^\dagger \mathcal{N}(\Pi_U)^{-1/2} N_k\right)^2.
    \end{split}
\end{equation}

Now we need to obtain the Euclidean gradient of $J_d$ at any $U_0 \in V_d(\mathbb{C}^n)$.

\subsection{Useful gradients}

The following are gradients that are going to appear due to the chain rule, when
computing the gradient of the cost function,
\begin{equation}
    \nabla_U \mathrm{tr}(C_1(\Pi_{U_0}) \Pi_U) \rvert_{U=U_0},
\end{equation}
and
\begin{equation}
    \nabla_U \mathrm{tr}(C_2(\Pi_{U_0}) \mathcal{N}(\Pi_U)^{-1/2}) \rvert_{U=U_0},
\end{equation}
where $C_j(U_0) = C_j(U_0)^\dagger$.

\begin{lemma}
    \label{gradAux3}
    Let $A \in \mathbb{C}^{n \times n}$ be a given Hermitian
    matrix. then,
    \begin{equation}
        \nabla_U \mathrm{tr}(A \Pi_U) = 2 A U.
    \end{equation}
\end{lemma}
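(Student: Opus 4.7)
The plan is to unpack the Euclidean gradient definition introduced in the Remark above, namely $\nabla_U J = \partial J/\partial U_{\rm Re} + i\, \partial J/\partial U_{\rm Im}$, and to compute the two real partial derivatives by standard matrix calculus, then reassemble them.

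I would start by writing $U = X + iY$ with $X, Y \in \mathbb{R}^{n \times d}$, so that
\[
\Pi_U = UU^\dagger = XX^\top + YY^\top + i(YX^\top - XY^\top),
\]
and splitting $A = A_R + i A_I$ into its real and imaginary parts. Because $A = A^\dagger$, the real part $A_R$ is symmetric and the imaginary part $A_I$ is antisymmetric. Since both $A$ and $\Pi_U$ are Hermitian, $\mathrm{tr}(A\Pi_U)$ is real, and a short expansion gives
\[
\mathrm{tr}(A\Pi_U) = \mathrm{tr}\!\big(A_R(XX^\top + YY^\top)\big) - \mathrm{tr}\!\big(A_I(YX^\top - XY^\top)\big).
\]

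Next I would differentiate with respect to $X$ and $Y$ separately, using the standard identities $\partial\,\mathrm{tr}(BZ)/\partial Z = B^\top$ and $\partial\,\mathrm{tr}(BZ^\top)/\partial Z = B$, together with $A_R^\top = A_R$ and $A_I^\top = -A_I$. A direct bookkeeping yields
\[
\frac{\partial J}{\partial X} = 2A_R X - 2 A_I Y, \qquad \frac{\partial J}{\partial Y} = 2 A_R Y + 2 A_I X.
\]

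Finally, I would assemble these via the complex-gradient convention:
\[
\nabla_U J = \frac{\partial J}{\partial X} + i\, \frac{\partial J}{\partial Y} = 2(A_R + i A_I)(X + iY) = 2AU,
\]
which is the claim. The only genuine obstacle is keeping signs consistent when transposing $A_I$ and pairing the cross terms correctly; once the real and imaginary parts are collected, the recombination into $2AU$ is automatic. A slicker alternative would be to use Wirtinger calculus, treating $U$ and $\overline{U}$ as independent and noting $\nabla_U = 2\,\partial/\partial \overline{U}$ applied to $\mathrm{tr}(AU\overline{U}^{\top})$ gives $2AU$ in one line, but I would stick to the real/imaginary presentation to match the conventions fixed by the paper's Remark.
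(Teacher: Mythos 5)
Your proposal is correct and follows essentially the same route as the paper: decompose $U$ into real and imaginary parts, compute the two real partial derivatives by standard matrix calculus, and recombine them via the convention $\nabla_U = \partial_{U_{\rm Re}} + i\,\partial_{U_{\rm Im}}$ to get $2AU$. The only cosmetic difference is that you also split $A = A_R + iA_I$ and invoke Hermiticity early, whereas the paper keeps $A$ whole and differentiates $\mathrm{tr}(U^\dagger A U)$ directly, obtaining $(A + A^\top)U_{\rm Re} + i(A - A^\top)U_{\rm Im}$ and its counterpart before summing.
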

\begin{proof}
    Since $U = U_{\rm Re} + \iu U_{\rm Im}$,
    \begin{equation}
        \begin{split}
            \mathrm{tr}(A \Pi_U) &= \mathrm{tr}(U^\dagger A U) \\
            &= \mathrm{tr}((U_{\rm Re} - \iu U_{\rm Im})^\top A (U_{\rm Re} + \iu U_{\rm Im})) \\
            &= \mathrm{tr}(U_{\rm Re}^\top A U_{\rm Re}) - \iu \mathrm{tr}(U_{\rm Im}^\top A U_{\rm Re}) +
            \iu \mathrm{tr}(U_{\rm Re}^\top A U_{\rm Im}) + \mathrm{tr}(U_{\rm Im}^\top A U_{\rm Im}).
        \end{split}
    \end{equation}

    Therefore $\frac{\partial}{\partial U_{\rm Re}} \mathrm{tr}(A \Pi_U), \frac{\partial}{\partial U_{\rm Im}} \mathrm{tr}(A \Pi_U) \in \mathbb{R}^{n \times d}$ are
    \begin{equation}
        \frac{\partial}{\partial U_{\rm Re}} \mathrm{tr}(A \Pi_U) =
        (A + A^\top) U_{\rm Re} + \iu (A - A^\top) U_{\rm Im},
    \end{equation}
    \begin{equation}
        \frac{\partial}{\partial U_{\rm Im}} \mathrm{tr}(A \Pi_U) =
        (A + A^\top) U_{\rm Im} - \iu (A - A^\top) U_{\rm Re},
    \end{equation}
    and
    \begin{equation}
    \begin{split}
        \nabla_U \mathrm{tr}(A \Pi_U) &=
        \frac{\partial}{\partial U_{\rm Re}} \mathrm{tr}(A \Pi_U) +
        \iu \frac{\partial}{\partial U_{\rm Im}} \mathrm{tr}(A \Pi_U) \\
        &=
        2 A U.
    \end{split}
    \end{equation}
\end{proof}

\begin{lemma}
    \label{gradAux2}
    Let $A \in \mathbb{C}^{n \times n}$ be a given Hermitian matrix.
    Then,
    \begin{equation}
        \nabla_U \mathrm{tr}(A \mathcal{N}(\Pi_U)^{-1/2}) =
        2 \sum_{k=1}^m \sum_{j=1}^3 N_k^\dagger \mathrm{vec}^{-1}(\hat{C}^{-1} \mathrm{vec}(C_j)^\dagger N_k) U,
    \end{equation}
    where
    \begin{equation}
        C_1 = - \mathcal{N}(\Pi_{U_0})^{-1/2} A \mathcal{N}(\Pi_{U_0})^{-1/2},
    \end{equation}
    \begin{equation}
        C_2 = \mathcal{N}(\Pi_{U_0})^{-1} A (\mathds{1}_n - \mathcal{N}(\Pi_{U_0})^{-1/2} \mathcal{N}(\Pi_{U_0})^{1/2}),
    \end{equation}
    \begin{equation}
        C_3 = (\mathds{1}_n - \mathcal{N}(\Pi_{U_0})^{-1/2} \mathcal{N}(\Pi_{U_0})^{1/2}) A \mathcal{N}(\Pi_{U_0})^{-1},
    \end{equation}
    \begin{equation}
        \hat{C} = \mathcal{N}(\Pi_U)^{\top/2} \otimes \mathds{1}_n + \mathds{1}_n \otimes \mathcal{N}(\Pi_U)^{1/2}.
    \end{equation}
\end{lemma}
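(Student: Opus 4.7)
The plan is to apply the chain rule along the composite map $U \mapsto \Pi_U \mapsto M := \mathcal{N}(\Pi_U) \mapsto M^{-1/2} \mapsto \mathrm{tr}(A M^{-1/2})$. The outermost two layers simply give $d\,\mathrm{tr}(A M^{-1/2}) = \mathrm{tr}(A\, dM^{-1/2})$, so the bulk of the work is to express $dM^{-1/2}$ linearly in $dU$ and then to recognize the result as an expression of the form $2\sum_k N_k^\dagger(\cdot)N_k U$ via the already-proven identity $\nabla_U\mathrm{tr}(B\Pi_U)=2BU$ from the preceding Lemma.

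First, I would differentiate $M^{-1/2}$ as a Moore--Penrose pseudo-inverse, since $M$ may be rank-deficient. Differentiating the defining identities $M^{1/2}M^{-1/2}=M^{-1/2}M^{1/2}=\Pi_M$ (the projector onto the image of $M$) together with $M^{1/2}M^{1/2}=M$, and combining them via the standard pseudo-inverse derivative formula, yields the decomposition
\begin{equation*}
dM^{-1/2}=-M^{-1/2}(dM^{1/2})M^{-1/2}+M^{-1}(dM^{1/2})(\mathds{1}_n-\Pi_M)+(\mathds{1}_n-\Pi_M)(dM^{1/2})M^{-1}.
\end{equation*}
Pairing with $A$, cycling the trace, and using Hermiticity of $M^{\pm 1/2}$ and $\Pi_M$, the three summands produce exactly the coefficient matrices $C_1$, $C_2$, $C_3$ of the statement (with $\Pi_M$ identified as $\mathcal{N}(\Pi_U)^{-1/2}\mathcal{N}(\Pi_U)^{1/2}$), each paired with $dM^{1/2}$ under a single trace.

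Second, I would express $dM^{1/2}$ in terms of $dM$ via the Sylvester identity $(dM^{1/2})M^{1/2}+M^{1/2}(dM^{1/2})=dM$, obtained by differentiating $(M^{1/2})^2=M$. Vectorizing with $\mathrm{vec}(AXB)=(B^\top\otimes A)\mathrm{vec}(X)$ converts this into $\hat{C}\,\mathrm{vec}(dM^{1/2})=\mathrm{vec}(dM)$ with $\hat{C}$ exactly as in the statement, so that $\mathrm{vec}(dM^{1/2})=\hat{C}^{-1}\mathrm{vec}(dM)$ and each $\mathrm{tr}(C_j\,dM^{1/2})$ becomes a linear functional of $\mathrm{vec}(dM)$ in which $\hat{C}^{-1}$ is sandwiched between the vectorization of $C_j$ and $\mathrm{vec}(dM)$. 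Finally, I would propagate this linear functional back through $dM=\sum_k N_k(dU\,U^\dagger+U\,dU^\dagger)N_k^\dagger$: writing the inner functional as $\mathrm{tr}(G\,\cdot\,)$ for the appropriate Hermitian $G$, the identity $\sum_k\mathrm{tr}\bigl(G N_k(dU\,U^\dagger+U\,dU^\dagger)N_k^\dagger\bigr)=\sum_k d\,\mathrm{tr}(N_k^\dagger G N_k\,\Pi_U)$ reduces everything to an instance of the preceding Lemma, yielding the outer factor $2\sum_k N_k^\dagger G N_k U$. Reassembling $G$ as $\mathrm{vec}^{-1}$ of $\hat{C}^{-1}$ applied to the sum over $j$ of suitably conjugated vectorizations of $C_j$ produces the stated formula.

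The main obstacle is the first step: correctly handling the Moore--Penrose pseudo-inverse when the image of $M=\mathcal{N}(\Pi_U)$ is allowed to drift with $U$. In the full-rank case only $C_1$ survives and the calculation collapses to a one-term Sylvester inversion; the $C_2$ and $C_3$ corrections arise from the kernel projector $\mathds{1}_n-\Pi_M$ and must be tracked carefully through the cyclic trace manipulations, together with a check that $\hat{C}^{-1}$ can be legitimately applied to $\mathrm{vec}(dM)$ in spite of the singularity of $M^{1/2}$ (interpreting $\hat{C}^{-1}$ as a pseudo-inverse on the relevant subspace). Once this three-term decomposition is in hand, the Sylvester inversion and the outer chain-rule assembly are routine.
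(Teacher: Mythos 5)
Your proposal is correct and follows essentially the same route as the paper's proof: the Moore--Penrose pseudo-inverse derivative gives the three-term decomposition producing $C_1$, $C_2$, $C_3$; the Sylvester identity for $d\mathcal{N}(\Pi_U)^{1/2}$ is vectorized and inverted through $\hat{C}$ (with the kernel ambiguity you flag handled in the paper by showing the homogeneous term is annihilated once sandwiched with $\mathcal{N}(\Pi_U)^{-1/2}$); and the final assembly reduces to the preceding Lemma via $\nabla_U\mathrm{tr}(B\,\Pi_U)=2BU$.
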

\begin{proof}
    For any non-zero matrix $X = (X_{jk}) \in \mathbb{C}^{n \times n}$ and
    $A \in \mathbb{C}^{n \times n}$, the derivative of the Moore-Penrose pseudo inverse
    is given as
    \begin{equation}
    \begin{split}
        \frac{\partial}{\partial X_{jk}} X^{-1} =&
        - X^{-1} \frac{\partial X}{\partial X_{jk}} X^{-1}
        + (\mathds{1}_n - X^{-1} X) \frac{\partial X^\dagger}{\partial X_{jk}} (X^{-1})^\dagger X^{-1} \\
        &+ X^{-1} (X^{-1})^\dagger \frac{\partial X^\dagger}{\partial X_{jk}} (\mathds{1}_n - X X^{-1}),
    \end{split}
    \end{equation}
    where the derivative is the Euclidean derivative
    \cite{golubDifferentiationPseudoInversesNonlinear1973, kolihaContinuityDifferentiabilityMoorePenrose2001}.
    Therefore, we only consider the derivative of the underlined parts of the
    following equation,
    \begin{equation}
        \begin{split}
            \nabla_U \mathrm{tr}(A \mathcal{N}(&\Pi_U)^{-1/2}) \rvert_{U=U_0}
            = - \nabla_U \mathrm{tr}(A \mathcal{N}(\Pi_{U_0})^{-1/2} \underline{\mathcal{N}(\Pi_U)^{1/2}}
            \mathcal{N}(\Pi_{U_0})^{-1/2}) \rvert_{U=U_0} \\
            & + \nabla_U \mathrm{tr}(A (\mathds{1}_n - \mathcal{N}(\Pi_{U_0})^{-1/2} \mathcal{N}(\Pi_{U_0})^{1/2})
            \underline{\mathcal{N}(\Pi_U)^{1/2}} \mathcal{N}(\Pi_{U_0})^{-1}) \rvert_{U=U_0} \\
            & + \nabla_U \mathrm{tr}(A \mathcal{N}(\Pi_{U_0})^{-1} \underline{\mathcal{N}(\Pi_U)^{1/2}}
            (\mathds{1}_n - \mathcal{N}(\Pi_{U_0})^{-1/2} \mathcal{N}(\Pi_{U_0})^{1/2})) \rvert_{U=U_0}.
        \end{split}
        \label{eq:biggrad}
    \end{equation}

    Note that if $\mathcal{N}(\Pi_U)$ is invertible, i.e., positive definite, the
    above equation becomes
    \begin{equation}
        \nabla_U \mathrm{tr}(A \mathcal{N}(\Pi_U)^{1/2}) \rvert_{U=U_0} =
        -\nabla_U \mathrm{tr}(A \mathcal{N}(\Pi_{U_0})^{-1/2} \mathcal{N}(\Pi_U)^{1/2}
        \mathcal{N}(\Pi_{U_0})^{-1/2}) \rvert_{U=U_0}.
    \end{equation}

    Next, we establish the derivative of $\mathcal{N}(\Pi_U)^{1/2}$ in \eqref{eq:biggrad}.
    Since $\mathcal{N}(\Pi_U) = \mathcal{N}(\Pi_U)^{1/2} \mathcal{N}(\Pi_U)^{1/2}$,
    we have the following Sylvester equation,
    \begin{equation}
        \frac{\partial}{\partial U_{jk}} \mathcal{N}(\Pi_U) =
        \frac{\partial \mathcal{N}(\Pi_U)^{1/2}}{\partial U_{jk}} \mathcal{N}(\Pi_U)^{1/2}
        + \mathcal{N}(\Pi_U)^{1/2} \frac{\partial \mathcal{N}^{1/2}}{\partial U_{jk}}.
    \end{equation}

    By using vectorization,
    \begin{equation}
    \begin{split}
        \mathrm{vec}(\frac{\partial}{\partial U_{jk}} \mathcal{N}(\Pi_U)) =&
        \hat{C}\mathrm{vec}(\frac{\partial \mathcal{N}(\Pi_U)^{1/2}}{\partial U_{jk}}),
    \end{split}
    \end{equation}
    where
    \[
        \hat{C} := \mathcal{N}(\Pi_U)^{\top/2} \otimes \mathds{1}_n +
        \mathds{1}_n \otimes \mathcal{N}(\Pi_U)^{1/2}.
    \]
    
    For $C \in \mathbb{C}^{n \times n}$ and $b \in \mathbb{C}^n$, the general
    solution of $C x = b$ is given by $x = C^{-1} b + (\mathds{1}_n - C^{-1} C)d$
    if the solution exists, where $d \in \mathbb{C}^n$ is an arbitrary vector
    (we consider the Moore-Penrose pseudo-inverse if $\mathrm{det}(C) = 0$). Hence,
    \begin{equation}
        \label{eq:derivative_sqrt_matrix}
        \frac{\partial \mathcal{N}(\Pi_U)^{1/2}}{\partial U_{jk}} =
        \mathrm{vec}^{-1}(\hat{C}^{-1} \mathrm{vec}(\frac{\partial}{\partial U_{jk}} \mathcal{N}(\Pi_U))
        + (\mathds{1}_{n^2} - \hat{C}^{-1} \hat{C}) z_{jk}),
    \end{equation}
    where $z_{jk} \in \mathbb{C}^{n^2}$ is arbitrarily chosen. Now we analyze
    the effect of $z_{jk}$.  
    Note that since $\mathcal{N}(\Pi _{U_{0}}) $ is Hermitian and non-negative, there exists a unitary matrix $V \in \mathbb{C}^{n\times n}$ such that
    \begin{align*}
        \mathcal{N}(\Pi _{U_{0}}) = V \begin{bmatrix}
            \Lambda & \\ & O_{p}
        \end{bmatrix}
        V^{\dagger}
        ,
    \end{align*}
    where $\Lambda \in \mathbb{R}^{(n-p) \times (n-p)}$ is a diagonal matrix with strictly positive diagonal element and $1\leq p < n$ is the rank of the kernel of $\mathcal{N}(\Pi _{U_{0}}) $
    (here, we do not consider $p=0$ or $p=n$).
    Hence, we have
    \begin{equation}
        \hat{C} = \overline{V} \otimes V
        \left(
        \begin{bmatrix}
            \sqrt{\Lambda } & \\ & O_{p}
        \end{bmatrix} \otimes \one_{n} +
        \one_{n} \otimes
        \begin{bmatrix}
            \sqrt{\Lambda} & \\ & O_{p}
        \end{bmatrix}
        \right)
        V^{\top} \otimes V^{\dagger}
    \end{equation}
    and
    \begin{align*}
        I_{n^2} - \hat{C}^{-1}\hat{C}
        =&
        \overline{V} \otimes V \times
        \begin{bmatrix}
            O_{n(n-p)} &
            \\
            & O_{n-p} &
            \\
            & & \one_{p} &
            \\
            & & & \ddots &
            \\
            & & &  & O_{n-p} &
            \\
            & & &  &  & \one_{p}
        \end{bmatrix}
        \times V^{\top} \otimes V^{\dagger}
        \\
        =&
        \overline{V} \otimes V \times
        \left(
        \begin{bmatrix}
            O_{n-p} & \\ & \one_{p}
        \end{bmatrix}
        \otimes
        \begin{bmatrix}
            O_{n-p} & \\ & \one_{p}
        \end{bmatrix}
        \right)
        \times V^{\top} \otimes V^{\dagger}
        . 
    \end{align*}
    From \eqref{eq:derivative_sqrt_matrix}, $\mathcal{N}(\Pi_{U}) ^{-1/2}$ is multiplied from, at least, one of the both sides of the derivative of $\mathcal{N}^{-1/2}$, so for any $X \in \mathbb{C}^{n\times n}$,
    \begin{align*}
        (\mathcal{N}(\Pi_{U}) ^{-1/2} \otimes X) (\one_{n^2} - \hat{C}^{-1}\hat{C} ) &= \\
        (X\otimes \mathcal{N}(\Pi_{U}) ^{-1/2} ) (\one_{n^2} - \hat{C}^{-1}\hat{C} ) &= 0
        . 
    \end{align*}
    Therefore, without loss of generality, we can put $z_{jk} = 0$.

    For $C = C^\dagger \in \mathbb{C}^{n \times n}$, by using $\mathrm{tr}(A^\dagger B) = \mathrm{vec}(A)^\dagger \mathrm{vec}(B)$
    and the fact that $\hat{C}^\dagger = \hat{C}$,
    \begin{equation}
        \begin{split}
            \nabla_U \mathrm{tr}(C \mathcal{N}(\Pi_U)^{1/2}) =&
            \nabla_U \mathrm{vec}(C)^\dagger (\hat{C}^{-1} \mathrm{vec}(\mathcal{N}(\Pi_U))) \\
            =& \nabla_U (\hat{C}^{-1} \mathrm{vec}(C))^\dagger \mathrm{vec}(\mathcal{N}(\Pi_U)) \\
            =& \nabla_U \mathrm{tr}(\mathrm{vec}^{-1}(\hat{C}^{-1} \mathrm{vec}(C))^\dagger \mathcal{N}(\Pi_U)) \\
            =& \nabla_U \sum_{k=1}^m \mathrm{tr}(N_k^\dagger \mathrm{vec}^{-1}(\hat{C}^{-1} \mathrm{vec}(C))^\dagger N_k \Pi_U) \\
            =& 2 \left(\sum_{k=1}^m N_k^\dagger \mathrm{vec}^{-1}(\hat{C}^{-1} \mathrm{vec}(C))^\dagger N_k \right) U.
        \end{split}
        \label{eq:tallgrad}
    \end{equation}
    From (\ref{eq:biggrad}) and (\ref{eq:tallgrad}), we obtain the Euclidean
    gradient of $\mathrm{tr}(A \mathcal{N}(\Pi_U))^{-1/2}$.
\end{proof}

\subsection{Gradient of the cost function for code optimization}

We can now proceed to compute the Euclidean gradient of the cost function
\begin{equation}
    \begin{split}
        \nabla_U J_d(U)& \rvert_{U = U_0} = \nabla_U \sum_{k,l=1}^m \mathrm{tr}\left(\Pi_U N_l^\dagger \mathcal{N}(\Pi_U)^{-1/2} N_k\right)^2 \rvert_{U = U_0} \\
        = \sum_{k,l=1}^m &\left( 2 \mathrm{tr}\left(\Pi_{U_0} N_l^\dagger \mathcal{N}(\Pi_{U_0})^{-1/2} N_k\right) \right.
        (\nabla_U \mathrm{tr}\left(N_l^\dagger \mathcal{N}(\Pi_{U_0})^{-1/2} N_k \Pi_U\right) \rvert_{U = U_0} + \\
        &\quad \left. \nabla_U \mathrm{tr}\left(N_k \Pi_{U_0} N_l^\dagger \mathcal{N}(\Pi_U)^{-1/2}\right) \rvert_{U = U_0}) \right).
    \end{split}
\end{equation}

By applying Lemma \ref{gradAux3}
\begin{equation}
    \begin{split}
        \nabla_U J_d(U) \rvert_{U = U_0} =
        \sum_{k,l=1}^m &\left( 2 \mathrm{tr}\left(\Pi_{U_0} N_l^\dagger \mathcal{N}(\Pi_{U_0})^{-1/2} N_k\right) \right.
        (2 N_l^\dagger \mathcal{N}(\Pi_{U_0})^{-1/2} N_k U + \\
        &\quad \left. \nabla_U \mathrm{tr}\left(N_k \Pi_{U_0} N_l^\dagger \mathcal{N}(\Pi_U)^{-1/2}\right) \rvert_{U = U_0}) \right).
    \end{split}
\end{equation}

Finally, by applying Lemma \ref{gradAux2}
\begin{equation}
    \begin{split}
        \nabla_U J_d(U) \rvert_{U = U_0} =
        4 \sum_{j,k,l=1}^m &\left( \mathrm{tr}\left(\Pi_{U_0} N_l^\dagger \mathcal{N}(\Pi_{U_0})^{-1/2} N_k\right) \right.
        (N_l^\dagger \mathcal{N}(\Pi_{U_0})^{-1/2} N_k U + \\
        &\quad \left. \sum_{x=1}^3 (N_j^\dagger \mathrm{vec}^{-1}(\hat{C}^{-1} \mathrm{vec}(C_x))^\dagger N_j) U) \right),
    \end{split}
\end{equation}
where
\begin{equation*}
    C_1 = - \mathcal{N}(\Pi_{U_0})^{-1/2} N_k \Pi_{U_0} N_l^\dagger \mathcal{N}(\Pi_{U_0})^{-1/2},
\end{equation*}
\begin{equation*}
    C_2 = \mathcal{N}(\Pi_{U_0})^{-1} N_k \Pi_{U_0} N_l^\dagger (\mathds{1}_n - \mathcal{N}(\Pi_{U_0})^{-1/2} \mathcal{N}(\Pi_{U_0})^{1/2}),
\end{equation*}
\begin{equation*}
    C_3 = (\mathds{1}_n - \mathcal{N}(\Pi_{U_0})^{-1/2} \mathcal{N}(\Pi_{U_0})^{1/2}) N_k \Pi_{U_0} N_l^\dagger \mathcal{N}(\Pi_{U_0})^{-1},
\end{equation*}
\begin{equation*}
    \hat{C} = \mathcal{N}(\Pi_U)^{\top/2} \otimes \mathds{1}_n + \mathds{1}_n \otimes \mathcal{N}(\Pi_U)^{1/2}.
\end{equation*}

\subsection{Gradient of the cost function for recovery optimization}

We shift now our focus to the Euclidian gradient of the cost function
\eqref{eq:cost_recovery} of problem \eqref{eq:optimprob2}. In order to compute
it, we divide it into blocks that can be computed independently of each other.
Each block corresponds to one of the Kraus operators $R_k$, denoted by
$\nabla_{R_k} J(\hat{R})$. Remember also that we need to compute the derivatives
over the real and imaginary parts separately. Therefore we have that
\begin{equation*}
\begin{split}
\nabla_{R_k} J(\hat{R}) =& \nabla_{R_k} \sum_{j l} | {\rm tr}(R_l N_j \Pi) |^2 \\
=& (\frac{\partial}{\partial R_{k_{\rm Re}}} + \iu \frac{\partial}{\partial R_{k_{\rm Im}}})
\sum_j  {\rm tr}((R_{k_{\rm Re}} + \iu R_{k_{\rm Im}}) N_j \Pi)
{\rm tr}(\Pi N_j^\dagger (R_{k_{\rm Re}}^\top - \iu R_{k_{\rm Im}}^\top)).
\end{split}
\end{equation*}
The matrix derivative over the real and imaginary parts of $R_k$ are
\begin{equation*}
\begin{split}
\frac{\partial}{\partial R_{k_{\rm Re}}} J(\hat{R}) = \sum_j&
\left({\rm tr}(R_{k_{\rm Re}} N_j \Pi)
\frac{\partial}{\partial R_{k_{\rm Re}}} {\rm tr}(\Pi N_j^\dagger R_{k_{\rm Re}}^\top) \right. \\
& + {\rm tr}(\Pi N_j^\dagger R_{k_{\rm Re}}^\top)
\frac{\partial}{\partial R_{k_{\rm Re}}} {\rm tr}(R_{k_{\rm Re}} N_j \Pi) \\
& - \iu {\rm tr}(\Pi N_j^\dagger R_{k_{\rm Im}}^\top)
\frac{\partial}{\partial R_{k_{\rm Re}}} {\rm tr}(R_{k_{\rm Re}} N_j \Pi) \\
& \left. + \iu {\rm tr}(R_{k_{\rm Im}} N_j \Pi)
\frac{\partial}{\partial R_{k_{\rm Re}}} {\rm tr}(\Pi N_j^\dagger R_{k_{\rm Re}}^\top) \right) \\
= \sum_j& \left({\rm tr}(R_k N_j \Pi) \Pi N_j^\dagger \right.
\left. + {\rm tr}(\Pi N_j^\dagger R_k^\dagger) \Pi^\top N_j \right),
\end{split}
\end{equation*}
\begin{equation*}
\begin{split}
\frac{\partial}{\partial R_{k_{\rm Im}}} J(\hat{R}) = \sum_j&
\left( \iu {\rm tr}(\Pi N_j^\dagger R_{k_{\rm Re}}^\top)
\frac{\partial}{\partial R_{k_{\rm Im}}} {\rm tr}(R_{k_{\rm Im}} N_j \Pi) \right. \\
& - \iu {\rm tr}(R_{k_{\rm Re}} N_j \Pi)
\frac{\partial}{\partial R_{k_{\rm Im}}} {\rm tr}(\Pi N_j^\dagger R_{k_{\rm Im}}^\top) \\
& + {\rm tr}(\Pi N_j^\dagger R_{k_{\rm Im}}^\top)
\frac{\partial}{\partial R_{k_{\rm Im}}} {\rm tr}(R_{k_{\rm Im}} N_j \Pi) \\
& + \left. {\rm tr}(R_{k_{\rm Im}} N_j \Pi)
\frac{\partial}{\partial R_{k_{\rm Im}}} {\rm tr}(\Pi N_j^\dagger R_{k_{\rm Im}}^\top) \right) \\
= \iu \sum_j& \left( {\rm tr}(\Pi N_j^\dagger R_k^\dagger) \Pi^\top N_j^\top \right.
\left. - {\rm tr}(R_k N_j \Pi) \Pi N_j^\dagger \right).
\end{split}
\end{equation*}
Finally, putting everything together we obtain
\begin{equation}
\nabla_{R_k} J(\hat{R}) = 2 \sum_j {\rm tr}(\Pi N_j) \Pi N_j^\dagger.
\end{equation}

\section{Optimized code-words}\label{regularizationeffects}

\subsection{Bit-flip}

The basis of the repetition code for the bit-flip noise is
\begin{equation}
\label{eq:shorcode}
    \{\ket{0_L}, \ket{1_L}\} = \{\ket{000}, \ket{111}\}.
\end{equation}

The basis of the optimised code for this case is
\begin{equation}
    \begin{split}
        \{\ket{0_L}, \ket{1_L}\} = \{
            \ket{111}, \ket{000}
        \}.
    \end{split}
\end{equation}

\subsection{Amplitude damping}

The basis of Leung's code for the amplitude damping noise is
\begin{equation}
\label{eq:leungcode}
    \begin{split}
        \{\ket{0_L}, \ket{1_L}\} =
        \{& \frac{1}{\sqrt{2}} (\ket{0000} + \ket{1111}), \\
        & \frac{1}{\sqrt{2}} (\ket{0011} + \ket{1100})\}.
    \end{split}
\end{equation}

The following are the basis elements of an optimized code without regularization:

\begin{equation}
\begin{split}
\ket{0_L} =
& (-0.6158-0.3305i) \ket{00000} + \\
& (-0.0008-0.0001i) \ket{00001} + \\
& (0.0005+0.0002i) \ket{00010} + \\
& (-0.1407+0.1719i) \ket{00011} + \\
& (-0.0004-0.0003i) \ket{00100} + \\
& (0.1407-0.1719i) \ket{00110} + \\
& (-0.0003-0.0002i) \ket{00111} + \\
& (0.0006-0.0001i) \ket{01000} + \\
& (-0.1407+0.1719i) \ket{01001} + \\
& (0.0005+0.0i)    \ket{01011} + \\
&(-0.1407+0.1719i) \ket{01100} + \\
& (-0.0006+0.0i)    \ket{01101} + \\
& (0.0006+0.0003i) \ket{01110} + \\
& (-0.4948-0.2633i) \ket{01111}
\end{split}
\end{equation}

\begin{equation}
\begin{split}
\ket{1_L} =
& (-0.0732+0.3387i) \ket{00000} + \\
& (-0.0006+0.0001i) \ket{00001} + \\
&  (0.0002-0.0002i) \ket{00010} + \\
& (-0.4115-0.177i)  \ket{00011} + \\
&    (-0.0+0.0003i) \ket{00100} + \\
&  (0.4115+0.177i)  \ket{00110} + \\
&     (0.0+0.0002i) \ket{00111} + \\
&  (0.0008+0.0001i) \ket{01000} + \\
& (-0.4115-0.177i)  \ket{01001} + \\
&  (0.0005-0.0i)    \ket{01011} + \\
& (-0.4115-0.177i)  \ket{01100} + \\
& (-0.0006-0.0i)    \ket{01101} + \\
&  (0.0001-0.0003i) \ket{01110} + \\
& (-0.0578+0.2719i) \ket{01111}
\end{split}
\end{equation}

The following are the basis elements of an optimized code with regularization parameter $\lambda = 10^{-3}$:

\begin{equation}
\begin{split}
\ket{0_L} =
& (-0.343-0.5096i) \ket{00000} + \\
& (-0.0542+0.3035i) \ket{00011} + \\
& (-0.0542+0.3035i) \ket{00101} + \\
& (0.0542-0.3035i) \ket{01010} + \\
& (-0.0542+0.3035i) \ket{01100} + \\
& (-0.2748-0.4086i) \ket{01111}
\end{split}
\end{equation}

\begin{equation}
\begin{split}
\ket{1_L} =
& (0.4448+0.1834i) \ket{00000} + \\
& (0.1572+0.3609i) \ket{00011} + \\
& (0.1572+0.3609i) \ket{00101} + \\
& (-0.1572-0.3609i) \ket{01010} + \\
& (0.1571+0.3609i) \ket{01100} + \\
& (0.3565+0.1471i) \ket{01111}
\end{split}
\end{equation}

\subsection{Depolarizing}

The basis of Bennett's code is
\begin{equation}
\label{eq:perfectcodeBennett}
    \begin{split}
        \{\ket{0_L}, \ket{1_L}\} =
        \{&
        \frac{1}{4} (\ket{00000} - \ket{00011} \\ & + \ket{00101} - \ket{00110} \\
        & + \ket{01001} + \ket{01010} \\ & - \ket{01100} - \ket{01111} \\
        & - \ket{10001} + \ket{10010} \\ & + \ket{10100} - \ket{10111} \\
        & -\ket{11000} - \ket{11011} \\ & -\ket{11101} - \ket{11110}), \\
        &\frac{1}{4}(-\ket{00001} - \ket{00010} \\ & - \ket{00100} - \ket{00111} \\
        & - \ket{01000} + \ket{01011} \\ & + \ket{01101} - \ket{01110} \\
        & - \ket{10000} - \ket{10011} \\ & + \ket{10101} + \ket{10110} \\
        & - \ket{11001} + \ket{11010} \\ & - \ket{11100} + \ket{11111})
        \}.
    \end{split}
\end{equation}

The basis of Laflamme's code is
\begin{equation}
\label{eq:perfectcodeLaflamme}
    \begin{split}
        \{\ket{0_L}, \ket{1_L}\} =
        \{&
        \frac{1}{2 \sqrt{2}} (\ket{00000} - \ket{00110} \\
                              & - \ket{01001} \ket{01111} \\
                              & + \ket{10011} - \ket{10101} \\
                              & - \ket{11010} \ket{11100}
                              ), \\
        &\frac{1}{2 \sqrt{2}}(\ket{00011} + \ket{00101} \\
                              & + \ket{01010} - \ket{01100} \\
                              & - \ket{10000} \ket{10110} \\
                              & - \ket{11001} + \ket{11111}
                              ) \}.
    \end{split}
\end{equation}

The basis elements of the optimised code are
\begin{equation}
    \begin{split}
        \ket{0_L} =
            ( 0.3285 + 0.1308i) & \ket{00001} + \\
            (-0.2731 - 0.2246i) & \ket{00111} + \\
            ( 0.1634 + 0.3135i) & \ket{01000} + \\
            ( 0.0602 + 0.3484i) & \ket{01110} + \\
            ( 0.3451 - 0.0768i) & \ket{10010} + \\
            (-0.2197 - 0.2770i) & \ket{10100} + \\
            ( 0.2886 + 0.2043i) & \ket{11011} + \\
            (-0.0579 + 0.3488i) & \ket{11101},
    \end{split}
\end{equation}
\begin{equation}
    \begin{split}
        \ket{1_L} =
            (-0.1787 + 0.3050i) & \ket{00010} + \\
            (-0.3525 - 0.0277i) & \ket{00100} + \\
            ( 0.3461 - 0.0723i) & \ket{01011} + \\
            (-0.2160 - 0.2799i) & \ket{01101} + \\
            ( 0.2692 - 0.2292i) & \ket{10001} + \\
            ( 0.3262 - 0.1364i) & \ket{10111} + \\
            (-0.3535 - 0.0022i) & \ket{11000} + \\
            ( 0.3361 + 0.1097i) & \ket{11110}.
    \end{split}
\end{equation}

\subsection{Pure dephasing}

\begin{equation}
    \begin{split}
        \ket{0_L} =
            (-0.4765 - 0.0846i) & \ket{000} + \\
            (-0.5187 + 0.0279i) & \ket{010} + \\
            (-0.1554 + 0.5252i) & \ket{100} + \\
            (-0.4064 - 0.1755i) & \ket{110},
    \end{split}
\end{equation}
\begin{equation}
    \begin{split}
        \ket{1_L} =
            (-0.6126 - 0.0680i) & \ket{001} + \\
            ( 0.4551 - 0.2127i) & \ket{011} + \\
            (-0.5523 + 0.0666i) & \ket{101} + \\
            (-0.1791 + 0.1618i) & \ket{111}.
    \end{split}
\end{equation}

\subsection{IBM qubits}

\begin{equation}
    \begin{split}
        \ket{0_L} =
            (-0.2042 + 0.0287i) & \ket{000} + \\
            ( 0.2455 + 0.1646i) & \ket{001} + \\
            (-0.2760 + 0.1608i) & \ket{010} + \\
            ( 0.0405 + 0.1622i) & \ket{011} + \\
            (-0.2451 - 0.0723i) & \ket{100} + \\
            ( 0.7038 + 0.0386i) & \ket{101} + \\
            ( 0.2741 - 0.2389i) & \ket{110} + \\
            ( 0.1428 - 0.1593i) & \ket{111},
    \end{split}
\end{equation}
\begin{equation}
    \begin{split}
        \ket{1_L} =
            ( 0.0539 - 0.0575i) & \ket{000} + \\
            (-0.2920 + 0.0457i) & \ket{001} + \\
            ( 0.0144 + 0.5368i) & \ket{010} + \\
            ( 0.0589 - 0.1927i) & \ket{011} + \\
            ( 0.1207 + 0.1157i) & \ket{100} + \\
            (-0.0634 + 0.2779i) & \ket{101} + \\
            ( 0.2549 - 0.4780i) & \ket{110} + \\
            (-0.3117 + 0.2786i) & \ket{111}.
    \end{split}
\end{equation}

\clearpage

\bibliographystyle{iopart-num}
\bibliography{refs}

\end{document}